\documentclass[conference]{IEEEtran}

\usepackage{cite}
\usepackage{amsmath,amssymb,amsfonts,amsthm}
\usepackage{algorithm}
\usepackage{algpseudocode}
\usepackage{graphicx}
\usepackage{textcomp}
\usepackage{xcolor}
\usepackage{titlesec}
\usepackage{enumerate}
\usepackage{enumitem}
\usepackage{titlesec}
\usepackage{multirow} 
\usepackage{longtable}

\usepackage{dblfloatfix}

\usepackage[section]{placeins}   
\usepackage{caption}
\usepackage{lscape}

\usepackage{amsthm}
\newtheorem{theorem}{Theorem}[section]
\newtheorem{corollary}[theorem]{Corollary}
\newcommand{\negl}{\mathrm{negl}}

\usepackage{cancel}
\usepackage{ulem} 

\def\BibTeX{{\rm B\kern-.05em{\sc i\kern-.025em b}\kern-.08em
    T\kern-.1667em\lower.7ex\hbox{E}\kern-.125emX}}
\graphicspath{ {./images/} } 

\makeatletter
\algrenewcommand\alglinenumber[1]{\number\value{ALG@line}:}
\makeatother

\renewcommand{\thealgorithm}{\arabic{algorithm}} 

\makeatother

\makeatletter
\newcounter{lineoffset} 
\algrenewcommand\alglinenumber[1]{\number\numexpr#1+\value{lineoffset}\relax:} 
\makeatother

\begin{document}

\title{Privacy-Preserving Patient Identity Management Framework for Secure Healthcare Access}

\author{
\IEEEauthorblockN{Nasif Muslim and Jean-Charles Grégoire}
\IEEEauthorblockA{\textit{Énergie Matériaux Télécommunications Research} \\
\textit{Institut national de la recherche scientifique (INRS)}\\
Montréal, Canada \\
Email: Nasif.Muslim@inrs.ca, jean-charles.gregoire@inrs.ca}
}

\maketitle

\begin{abstract}
Effective healthcare delivery depends on accurate longitudinal health records and addressing patients' concerns regarding the privacy of their information. While patient authentication is essential, reusing patient identifiers exposes individuals to linkability (associating multiple visits) and traceability (tying visits to real-world identities) risks. This paper presents a privacy-preserving, patient-centric identity management framework specifically tailored to the operational and regulatory requirements of healthcare. The framework balances operational reliability with strong privacy protections through a rooted trust anchor, anonymous pseudonyms, and a conditional traceability mechanism. It is formally specified, and its security and privacy properties are evaluated through MSRA-based architectural analysis and complementary formal verification. 
Simulation-based evaluation demonstrates that the framework's identity workflows are operationally feasible within the latency bounds typical of clinical environments.
\end{abstract}

\begin{IEEEkeywords}
Conditional traceability, healthcare, identity management, privacy-preserving authentication.
\end{IEEEkeywords}

%
\IEEEpeerreviewmaketitle

\vspace{7pt}

\section{Introduction}\label{introduction}

In this paper, the term \textit{patient} refers to any person seeking medical services, preventive care, or access to their health information. Modern healthcare operates on a fundamental tension: delivering high-quality, continuous care requires persistent patient identifiers to link longitudinal records, while protecting patient privacy demands strict limits on how those identifiers are used. In practice, patient identifiers are exchanged across routine healthcare workflows, including registration, appointment check-in, clinician record access, and pharmacy dispensing. Each of these steps depends on timely verification and interoperability between heterogeneous Electronic Health Record (EHR) systems~\cite{miller2004physicians}, often mediated through HL7 FHIR-based interfaces~\cite{bender2013hl7}. To reduce medical errors and support continuity of care, healthcare systems rely on these identifiers. However, the exact identifiers that enable safe record linkage also create privacy risks: they allow activities to be linked across contexts (\textit{linkability}) and tied to a real-world identity (\textit{traceability}).

Processing this patient data is subject to strict legal and privacy obligations. Regulatory frameworks such as the General Data Protection Regulation (GDPR)~\cite{gdpr, voigt2017eu} and the Health Insurance Portability and Accountability Act (HIPAA)~\cite{gostin2009beyond} mandate lawful bases for processing protected health information, enforce principles including data minimization and purpose limitation, and grant individuals rights to access, rectify, and erase their data. These laws reflect a strong societal commitment to privacy and provide mechanisms for holding organizations accountable. However, they primarily define \emph{what} must be achieved legally, rather than \emph{how} it must be technically enforced~\cite{del-real_systematic_2025}. Therefore, legal compliance alone does not automatically guarantee technical privacy properties such as contextual unlinkability or protection against unintended cross-institutional data linkage.

In practice, many healthcare systems implement identity management through institutional perimeter controls and Role-Based Access Control (RBAC) within EHR infrastructures~\cite{bouwman_rights_2008, basil_health_2022}. Once registered, a patient identifier is typically treated as a persistent, long-lived reference used across internal workflows, longitudinal queries, inter-organizational exchanges, and patient portals. This design supports essential functions: (1) authenticating the patient for treatment or services, (2) enabling providers to retrieve historical medical data for safe, continuous care, and (3) allowing patients to access their own records. However, the persistent reuse of the same identifier increases vulnerability to insider threats. It facilitates unintended cross-domain, cross-institutional correlation of sensitive information, running counter to the spirit of purpose limitation and data minimization, even when the initial processing is legally authorized.

An example illustrates both the necessity and the risk of identifiers: A patient visits a community clinic, is examined by a doctor, and receives a prescription. The patient then presents the prescription at a pharmacy to obtain medication.\footnote{Anonymous payments are out of scope in this work.} At the clinic, the healthcare provider must retrieve prior records and update the repository, while at the pharmacy, the pharmacist must confirm that the prescription is dispensed to the intended recipient. Without identifiers, these verifications become unreliable; with heavily reused identifiers, cross-provider privacy is put at risk.

This scenario underscores the dilemma: identifiers are essential for operational reliability, yet their unfettered reuse enables privacy-compromising linkages. Existing approaches to mitigate these risks have proven inadequate. Relying on real-world identifiers (e.g., names, addresses, national IDs) heightens both risks. Healthcare-specific identifiers reduce traceability but still permit cross-provider linkability. Using self-chosen, provider-specific pseudonyms can prevent cross-provider linking, but this approach disrupts operations. Healthcare providers cannot reliably update longitudinal records, and uncoordinated pseudonyms introduce administrative inconsistencies (e.g., duplicate records, overbooking). Thus, neither universal identifiers nor ad hoc pseudonyms achieve the dual goals of privacy and effective record management.

Emerging identity management frameworks proposed in the research literature increasingly utilize Decentralized Identity Management (DIDM) and blockchain-based credential systems to prioritize user control and interoperability. However, these paradigms often under-specify the privacy–continuity tension inherent in healthcare environments. In practice, healthcare systems must reconcile two competing operational requirements. Limiting unnecessary cross-provider linkability during routine interactions reduces the risk of secondary profiling and unintended data correlation. Simultaneously, maintaining clinical safety and regulatory compliance requires mechanisms for lawful traceability and authorized record correlation. Many proposed DIDM-based approaches emphasize user sovereignty but do not formally model or address this balance within healthcare-specific operational constraints.

This paper addresses that gap by presenting a privacy-preserving, patient-centric identity management framework tailored to the unique operational and regulatory requirements of healthcare. The proposed framework integrates established cryptographic mechanisms within a novel system architecture, extending DIDM to meet Healthcare~4.0 priorities, with a particular emphasis on the auditability of healthcare activities. The main contributions of this paper are as follows:

\begin{itemize}

  \item \textbf{Framework:}
    A Healthcare-specific Identity Management (HIDM) framework that reconciles longitudinal record continuity, patient privacy, and regulatory accountability through explicit separation of duties and conditional traceability.

    \item \textbf{System Design and Enforcement Mechanisms:}
    A privacy-preserving identity management architecture in which established cryptographic primitives are employed as unified enforcement mechanisms, rather than as standalone contributions. The design realizes unlinkable authentication, controlled longitudinal record linkage, and conditional traceability through:
    \begin{itemize}
        \item unlinkable credential issuance and selective attribute disclosure for patient authentication based on Camenisch–Lysyanskaya (CL)~\cite{camenisch2004signature} signatures;
        \item privacy-preserving longitudinal record indexing mediated by the HRR using proxy re-encryption (PRE)~\cite{mambo1997proxy};
        \item pseudonym-specific signing key issuance without identity linkage, enabled via blind identity-based signatures (Blind IBS)~\cite{choon2002identity};
        \item secure, authenticated, and replay-resistant interaction tokens issued using Schnorr~\cite{schnorr1991efficient} and partially blind Schnorr~\cite{abe2000provably} signature schemes; and
        \item a separation-of-duties architectural design to enable conditional traceability through distributed authority control, supporting accountability while maintaining baseline patient privacy.
    \end{itemize}

    \item \textbf{Security and Privacy Analysis:}
    A two-layer formal evaluation establishes the robustness of the proposed framework. First, a Multilateral Security Requirements Analysis (MSRA)~\cite{gurses2006contextualizing} systematically maps stakeholder objectives and threat scenarios to the framework's security and privacy controls, demonstrating its alignment with healthcare regulatory principles. Second, a formal cryptographic analysis proves the unforgeability and replay resistance of issuer-signed artifacts under standard computational assumptions, while automated symbolic verification using Scyther and Tamarin verifies the confidentiality, synchronization, integrity, and unlinkability properties under the Dolev–Yao~\cite{dolev2003security} adversary model. Together, these analyzes indicate that the HIDM framework supports patient privacy, provider reliability, and regulatory traceability requirements within the healthcare ecosystem.

  \item \textbf{Empirical Validation:}
    A performance evaluation indicates that the proposed framework is computationally efficient and suggests practical feasibility based on the processing times of core cryptographic operations. The results show that pairing-based CL credentials achieve significantly lower latency than CL--RSA at equivalent security levels across credential issuance, token generation and use, and verification procedures.

\end{itemize}

The remainder of this paper is organized as follows. Section~\ref{sec:background} provides the necessary background and related concepts. Section~\ref{sec:design-overview-healthcare-idm} details the design of the proposed HIDM framework. Section~\ref{sec:security_privacy_analysis} presents the comprehensive security and privacy evaluation, including both MSRA-based and formal analyzes. Section~\ref{sec:performance-measurement} reports the performance evaluation results, and Section~\ref{sec:discussion} concludes the paper with a discussion and directions for future work.

\section{Background} \label{sec:background}

This section provides the conceptual and technical foundations necessary to understand the design of the proposed HIDM framework. 

Section~\ref{sec:identity-management} surveys the landscape of identity management, beginning with the role of credentials, the structure of generic IDM frameworks, and the evolution from traditional centralized models to decentralized, self-sovereign approaches. The discussion concludes by identifying the limitations of existing paradigms when applied to healthcare.

\subsection{Identity Management} \label{sec:identity-management}

This subsection provides an overview of the foundational concepts of identity management, establishing the context for the proposed HIDM framework. Section~\ref{subsubsec:identity-credential-role-idm} introduces the role of identity credentials and clarifies the core terminology. Section~\ref{subsubsec:structure-idm-framework} outlines the generic structure of identity management frameworks, highlighting how entities, credentials, and trust anchors interact. Section~\ref{subsubsec:decentralized-identity-management} discusses the shift from traditional centralized and federated models toward decentralized, self-sovereign identity paradigms. Finally, Section~\ref{subsubsec:limitations-idm-healthcare} identifies the limitations of existing approaches in the healthcare domain, motivating the need for specialized, domain-specific solutions.

\subsubsection{Identity, Credential, and Their Role in Identity Management}
\label{subsubsec:identity-credential-role-idm}

An \textit{identity} is a structured set of attributes that uniquely distinguishes an entity within a specific context. 
For example, an individual's identity may comprise their name, date of birth, residential address, and occupation. 
In digital ecosystems, establishing trust between entities essential for secure information exchange requires \textit{authentication}, whereby one party proves a claimed identity to another.

A \textit{credential} is an artifact that provides verifiable evidence of a claimed identity. 
Credentials can be broadly classified into two categories: physical credentials and digital credentials~\cite{adkins2020building}.

\begin{itemize}

  \item \textbf{Physical Credentials}: Tangible artifacts that provide proof of identity and personal details, typically issued by trusted authorities such as government agencies. Examples include identification cards and passports. In healthcare, these credentials are often used during patient onboarding to confirm a patient's identity and verify eligibility for services.

  \item \textbf{Digital Credentials}: Intangible artifacts that prove identity in online or electronic transactions. They are generally divided into three categories:
  
    \begin{itemize}

      \item \textbf{Knowledge-Based}: Verified using information known only to the user, such as passwords, PINs, or answers to security questions.
      
      \item \textbf{Possession-Based}: Verified through something the user possesses, such as public key certificates (e.g., X.509), attribute-based credentials (ABCs), one-time passwords (OTPs), smart cards, or hardware tokens.
      
      \item \textbf{Property-Based (Biometrics)}: Verified through inherent characteristics of the user, such as physiological traits (e.g., fingerprints, facial features, iris scans, DNA, vein patterns) and behavioral patterns (e.g., keystroke dynamics, mouse usage, gait recognition).
      
    \end{itemize}
    
\end{itemize}

In the latter case, biometric traits are intrinsically linked to the individual and cannot be easily rotated or revoked in the same manner as passwords or cryptographic keys. Traditional storage of raw biometric data poses significant security risks, including unauthorized access, large-scale data breaches, and identity misuse. Biometric Template Protection (BTP) schemes~\cite{dong2019risk} mitigate these risks by transforming biometric data into protected templates parameterized by secret or system-specific values, such that recovering the original biometric sample from the protected template is computationally infeasible under standard assumptions. Examples include Locality Sensitive Hashing (LSH)~\cite{indyk1998approximate} and Bio-hashing~\cite{kong2006palmprint}. These probabilistic transformations enable matching of similar (but not identical) biometric samples, an essential property for noisy inputs, while reducing invertibility and cross-template linkability. If a protected template is compromised, a new unlinkable template can be generated by reapplying the transformation with fresh parameters.

\subsubsection{An IDM Framework}
\label{subsubsec:structure-idm-framework}

An Identity Management (IDM) framework defines the functions required to govern both the lifecycle and usage of credentials. 
These functions are organized into two core modules: the Credential Issuance and Maintenance Module (CIMM) and the Access Management Module (AMM)~\cite{grassi2017digital}. 
The modular design enables flexibility and allows adaptation across different identity management paradigms. 
Each module and its functions are detailed below:

\paragraph{Credential Issuance and Maintenance Module (CIMM)}  
The CIMM manages all phases of a credential's lifecycle through a set of processes that are detailed below:

\begin{itemize}
    \item \textbf{Credential Generation} Validates attributes, checks documents, and issues credentials.
    \item \textbf{Status Verification} Checks that credentials remain valid and policy-compliant.
    \item \textbf{Update} Refreshes credential data when attributes change.
    \item \textbf{Revocation} Removes credentials if they are compromised, expired, or requested.
\end{itemize}

\paragraph{Access Management Module (AMM)}  
The AMM governs how credentials are used and enforces access policies for services and resources. The functions of the AMM are detailed below:

\begin{itemize}
    \item \textbf{Authentication} Verifies the entity’s credentials to confirm it is who it claims to be, for example, by checking a PIN, password, or biometric data linked to the credential identifier.
    \item \textbf{Authorization} Determines the access level granted to the entity based on its role, the applicable policies, and the context of the request.
    \item \textbf{Accounting} Maintains detailed audit logs of resource access, including who accessed what, when, and which actions were taken. It supports accountability, compliance audits, and the detection of suspicious activities.
\end{itemize}

\begin{table*}[!b] 
\centering
\caption{Comparison of Blockchain-Based Healthcare Identity Frameworks with Respect to Identifier Exposure, Longitudinal Support, and Traceability}
\label{tab:health-id-comparison}
\begin{tabular}{|p{2.0cm}|p{3.3cm}|p{3.3cm}|p{3.3cm}|p{3.3cm}|}
\hline
\textbf{Framework} &
\textbf{Persistent Identifier Exposed?} &
\textbf{Supports Longitudinal Records} &
\textbf{Provider-Unlinkability} &
\textbf{Conditional Traceability} \\
\hline

Health-ID (Javed et al. \cite{javed2021health}) &
\textbf{Yes} -- ledger-visible global identifier (\texttt{healthID}) &
\textbf{Partial} -- identity continuity without privacy-preserving record evolution &
\textbf{No} -- same identifier reused across domains &
\textbf{No} -- regulator-centric revocation only \\
\hline

Zhuang et al. \cite{zhuang2020patient} &
\textbf{Yes} -- ledger-visible global blockchain identifier &
\textbf{Yes} -- touchpoint-indexed longitudinal visits &
\textbf{No} -- global identifier reused across facilities &
\textbf{No} -- unconditional auditability without selective disclosure \\
\hline

Bai et al. \cite{bai2022self} &
\textbf{Yes} -- ledger-visible DIDs and metadata &
\textbf{Partial} -- identity continuity without privacy-preserving record evolution &
\textbf{Partial} -- pairwise DIDs, but ledger-level metadata correlation &
\textbf{No} -- revocation without selective deanonymization \\
\hline

Torongo et al. \cite{torongo2023blockchain} &
\textbf{Yes} -- ledger-visible DIDs and audit metadata &
\textbf{Partial} -- identity continuity without privacy-preserving record evolution &
\textbf{Partial} -- pairwise DIDs, but ledger-level correlation &
\textbf{No} -- full auditability without policy-bound tracing \\
\hline

\end{tabular}
\end{table*}

\subsubsection{Decentralized Identity Management (DIDM)}
\label{subsubsec:decentralized-identity-management}

Trust in digital identity management can be understood at two levels:
Level 1 — Cryptographic Authentication, where an entity proves control of a private key corresponding to a public key, and
Level 2 — Real-World Legitimacy, where the entity is recognized as a genuine legal or organizational actor.
In the Traditional Identity Management (TIDM) paradigm~\cite{rfc5280}, trust is anchored in centralized infrastructures. 
Traditional TLS-based channel authentication of the TIDM addresses both cryptographic authentication and partial real-world legitimacy (domain ownership and, in some cases, organizational identity) through handshakes based on public key certificates. These certificates are issued by Certificate Authorities (CAs) of the Public Key Infrastructure (PKI), which act as global trust anchors. The primary limitation of the TIDM is its reliance on central authorities at both the cryptographic and real-world legitimacy levels, which creates bottlenecks and single points of failure.

The DIDM paradigm~\cite{muhle2018survey} removes reliance on centralized PKI and Certificate Authorities (CAs) for cryptographic trust. Instead, it employs a Decentralized Public Key Infrastructure (DPKI)~\cite{Papageorgiou2020DPKI}, where key ownership is anchored in a distributed, tamper-evident ledger. 
This ledger stores Decentralized Identifiers (DIDs)~\cite{w3c-did-core} and their associated public keys, enabling any party to verify key ownership without relying on a central CA.
Unlike traditional PKI certificates, a DID functions only as an identifier (it proves control of a private key corresponding to a published public key) and does not establish real-world legitimacy (e.g., a licensed hospital). Such legitimacy must instead be conveyed through Verifiable Credentials (VCs) issued by recognized authorities and cryptographically bound to the holder's DID.

The following features characterize the DIDM paradigm:

\begin{itemize}
    \item \textbf{Possession-based credentials:} Typically implemented as attribute-based credentials that are digitally signed by the issuer and held by the user.
    \item \textbf{Local credential verification:} Verifiers independently check the cryptographic signature without requiring real-time validation from the issuer or a trusted third party.
\end{itemize}

Building on these features, the DIDM paradigm underpins the Self-Sovereign Identity (SSI) model~\cite{allen2016path}, which advances a user-centric approach to digital identity. 
SSI emphasizes individual control, informed consent, and privacy, ensuring that users remain the ultimate custodians of their identity data. 
Its design is further shaped by foundational privacy and usability principles, notably those articulated by Cameron in the Laws of Identity~\cite{cameron2005laws}:


These principles are implemented in SSI systems through three core components: Decentralized Identifiers (DID), DID-based Messaging (DIDComm), and Verifiable Credentials (VCs).

\begin{itemize}

    \item \textbf{DID}:
    A globally unique, persistent identifier for a digital entity that does not require a centralized registry~\cite{reed2020decentralized}. 
    Each DID resolves to a DID Document, which specifies public keys, verification methods, and service endpoints necessary for secure interactions with the entity. 

    %
    
    \item \textbf{DIDComm}:
    An application-layer protocol that enables secure, transport-agnostic communication between parties identified by DIDs~\cite{curren2023didcomm}. Unlike TLS, which secures a single connection via handshake and CA-issued certificates, it provides end-to-end encryption and mutual authentication by encrypting and signing each message with keys from the participants' DID Documents.
    This design supports secure routing through untrusted intermediaries (relays), asynchronous communication even when parties are offline, and compatibility with any transport (e.g., HTTP, Bluetooth).
    
    \item VC:
    A digitally signed, tamper-evident artifact that encapsulates a set of claims made by an Issuer about a Subject ~\cite{model2022verifiable}. Within DIDM, VCs serve as the primary mechanism for conveying real-world legitimacy and attributes. 
    The core elements are:
    \begin{itemize}
        \item Issuer: The trusted entity that generates and signs the credential.
        \item Subject: The entity identified by a DID to which the claims apply.
        \item Claims: Assertions such as age, role, or qualifications.
        \item Proof: A cryptographic signature that ensures authenticity and integrity.
    \end{itemize}

\end{itemize}

\subsubsection{Limitations of Identity Management in Healthcare} 
\label{subsubsec:limitations-idm-healthcare}

Standard deployments of traditional IDM paradigms lack the privacy-preserving and user-controlled features required in patient-centric healthcare ecosystems. 
While the SSI model within the DIDM paradigm addresses some of these gaps, it remains insufficient for the specialized demands of healthcare identity management. 
Regulatory frameworks such as HIPAA~\cite{hipaa} and GDPR~\cite{gdpr} impose strict requirements that expose the following shortcomings:

\begin{enumerate}
    \item \textbf{Lack of Verifiable Legitimacy:}  
     DID and DID Documents verify control over a cryptographic key, but they do not by themselves establish an entity’s real-world legitimacy (e.g., whether a clinic or pharmacy is legally authorized to provide healthcare services). Although provider credential verification is often framed as a security or anti-fraud concern, it is also directly relevant to privacy: disclosing sensitive medical data or cryptographic proofs to an unauthenticated or impersonated endpoint may result in unauthorized data exposure. Therefore, patients require mechanisms to verify that they are interacting with licensed and legitimate healthcare providers (e.g., clinics, hospitals, pharmacies)~\cite{preukschat2021self}. 
    While standard TLS (X.509-based PKI)~\cite{ cooper_internet_2008} provides transport-layer authentication and channel security, it does not inherently encode healthcare-specific authorization or legitimacy attributes beyond domain or organizational identity. Equipping healthcare providers with DIDs and legitimacy-verifiable credentials bridges this gap, allowing cryptographically secure, application-layer verification of clinical authorization.

    \item \textbf{Inadequate Support for Longitudinal Record Management with Privacy:}  
    Healthcare requires persistent, domain-specific identifiers (e.g., patient IDs) to link records across organizations~\cite{houtan2020survey}. However, exposing such identifiers directly creates linkability risks and profiling threats. 
    The tension between longitudinal record management and privacy is evident in the current state of the art of Health Information Exchange (HIE), which enables the sharing of patient records across different healthcare providers. 
    Javed et al.~\cite{javed2021health} proposed \textit{Health-ID}, a decentralized IDM framework for HIE on a consortium Ethereum blockchain. Patients undergo off-chain identity proofing by a healthcare administrator, who issues a signed JSON Web Token (JWT) containing selected attributes. The patient then deploys a Health Smart Contract (Health SC), whose blockchain address serves as their Health ID, and stores the hash of the JWT on the blockchain. A Registry Smart Contract, maintained by regulators, links the patient’s HealthID to their public key and validated attestations, enabling healthcare providers to authenticate patients consistently across domains.  
    Similarly, Zhuang et al.~\cite{zhuang2020patient} proposed a decentralized IDM for HIE on a permissioned Ethereum blockchain. In this design, the healthcare administrator creates a blockchain account for each patient, with the account address itself functioning as a global identifier linked to the patient’s Personal Health Information (PHI). Biometric data provided at registration is bound to the blockchain account and later used for authentication during follow-up visits, ensuring continuity across providers but reinforcing dependence on a persistent identifier.  
    More recent SSI-based frameworks, such as SSI-SHS by Bai et al.~\cite{bai2022self} and BDIMHS by Torongo et al.~\cite{torongo2023blockchain}, leverage Hyperledger Indy and Aries to support DIDs and Verifiable Credentials. These approaches mitigate linkability risks by enabling pairwise, pseudonymous interactions. However, they introduce a new challenge: without a privacy-preserving mechanism for linking health records to persistent identifiers, it becomes technically challenging for healthcare providers to maintain continuous longitudinal health histories, which are essential for safe and effective care. 
    Table~\ref{tab:health-id-comparison} summarizes this tension by comparing representative blockchain-based healthcare identity frameworks with respect to identifier exposure, longitudinal support, provider unlinkability, and conditional traceability.
    
    \item \textbf{Absence of Conditional Traceability:}  
    Healthcare demands accountability for both patients and providers to address fraud, malpractice, or administrative errors. Patients should be able to audit professional access to their records~\cite{shehu2024compliance}, and activities must be traceable under defined conditions rather than fully anonymous. Generic DIDM lacks native support for conditional traceability or patient-centered transparency.
\end{enumerate}

These limitations motivate the design of a HIDM framework that extends the SSI model with verifiable legitimacy, unlinkable pseudonyms for health records, and tamper-resistant, legally binding traceability.

\section{Design Overview of the HIDM Framework} 
\label{sec:design-overview-healthcare-idm}

This section presents the design of the proposed HIDM framework, which extends the Self-Sovereign Identity (SSI) model within the Decentralized Identity Management (DIDM) paradigm to address the unique requirements of healthcare identity management. The framework is built on three primary design principles, each formulated to overcome the corresponding limitation identified in Section~\ref{subsubsec:limitations-idm-healthcare}.

\begin{enumerate}

    \item \textbf{Verifiable legitimacy:}  
    Verifiable legitimacy is ensured by the Government Health Authority (GHA), which serves as the root of trust, issuing a digitally signed Legitimacy Verifiable Credential (L-VC) to each entity in the healthcare ecosystem. The L-VC attests that the holder is a legally recognized organization, ensuring that patients interact only with trusted and authorized entities~\cite{bai2022self}.

    \item \textbf{Unlinkable pseudonymity:} Unlinkable pseudonymity is provided by cryptographically generated pseudonyms linked to the healthcare-specific identifier of the patient using a PRE scheme (Section~\ref{subsubsec:issuance-pseudonym-token-key}), allowing the Health Record Repository (HRR) to recover the encrypted identifier while ensuring healthcare providers see only the pseudonym. Thus, cryptographic pseudonyms support consistent longitudinal record indexing while preventing linkability across different healthcare interactions~\cite{jensen2019pseudonymisation}.

    \item \textbf{Conditional traceability:}  
    Conditional traceability follows a separation-of-duties approach, in which identity-linkage responsibilities are divided between two independent authorities. As shown in Figure~\ref{fig:traceabilityblockdiagram}, resolving a pseudonymized event at a healthcare organization to a real-world identity requires a coordinated, multi-step resolution process. The Agency for PatientCare (APC) maintains the binding between a healthcare-specific $\mathsf{PatientID}$ and the patient’s verified real-world identity information ($\mathsf{PatientID} \rightarrow \mathsf{PII}$). In contrast, the Pseudonym Token Authority (PTA) maintains the mapping between $\mathsf{PatientID}$ and PTA-issued, event-scoped pseudonym tokens ($\mathsf{PatientID} \leftrightarrow \mathsf{PTI}$). Neither authority possesses sufficient information to independently link a pseudonymized event to a patient’s real-world identity. Traceability is therefore non-automatic and can be achieved only under explicitly authorized conditions when both authorities cooperate to reconstruct the identity chain~\cite{clark1987comparison,xu2024conditional}.

\end{enumerate}

\begin{figure}[h]
    \centering
    \includegraphics[width=8cm, angle=0, trim={0cm 0cm 0cm 0cm}, clip]{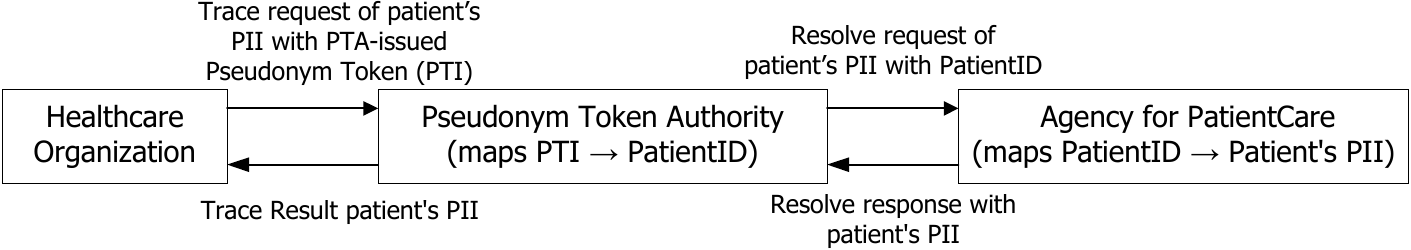}
    \caption{Coordinated identity reconstruction workflow showing the separation of duties between the PTA and APC}
    \label{fig:traceabilityblockdiagram}
\end{figure}

The remainder of this section is organized as follows: Section~\ref{subsec:architecture-healthcare-idm} describes the architectural components and their roles, and Section~\ref{subsec:patient-interaction-healthcare-focused-idm} presents the patient's end-to-end interaction with the IDM framework.

\subsection{Architecture of the HIDM} 
\label{subsec:architecture-healthcare-idm}

The HIDM framework comprises five core components: the Agency for PatientCare (APC), the Government Health Authority (GHA), the Healthcare Providers (including both organizations and individual professionals), the Health Record Repository (HRR), and the Auditor. Each component fulfills a distinct role in enabling a secure, efficient, and patient-centric healthcare ecosystem. Each component is required to have:

\begin{enumerate}
    \item a Decentralized Identifier (DID) demonstrating cryptographic control over its public keys.
    \item a Legitimacy Verifiable Credential (L-VC) attesting to its real-world legitimacy and operational authority.
\end{enumerate}

Patients likewise hold DIDs and demonstrate legitimacy through Patient Credentials issued by the APC (Section~\ref{subsubsec:patient_credential_issuance}).

\begin{figure*}[h]
 \centering
 \includegraphics[width=10cm]{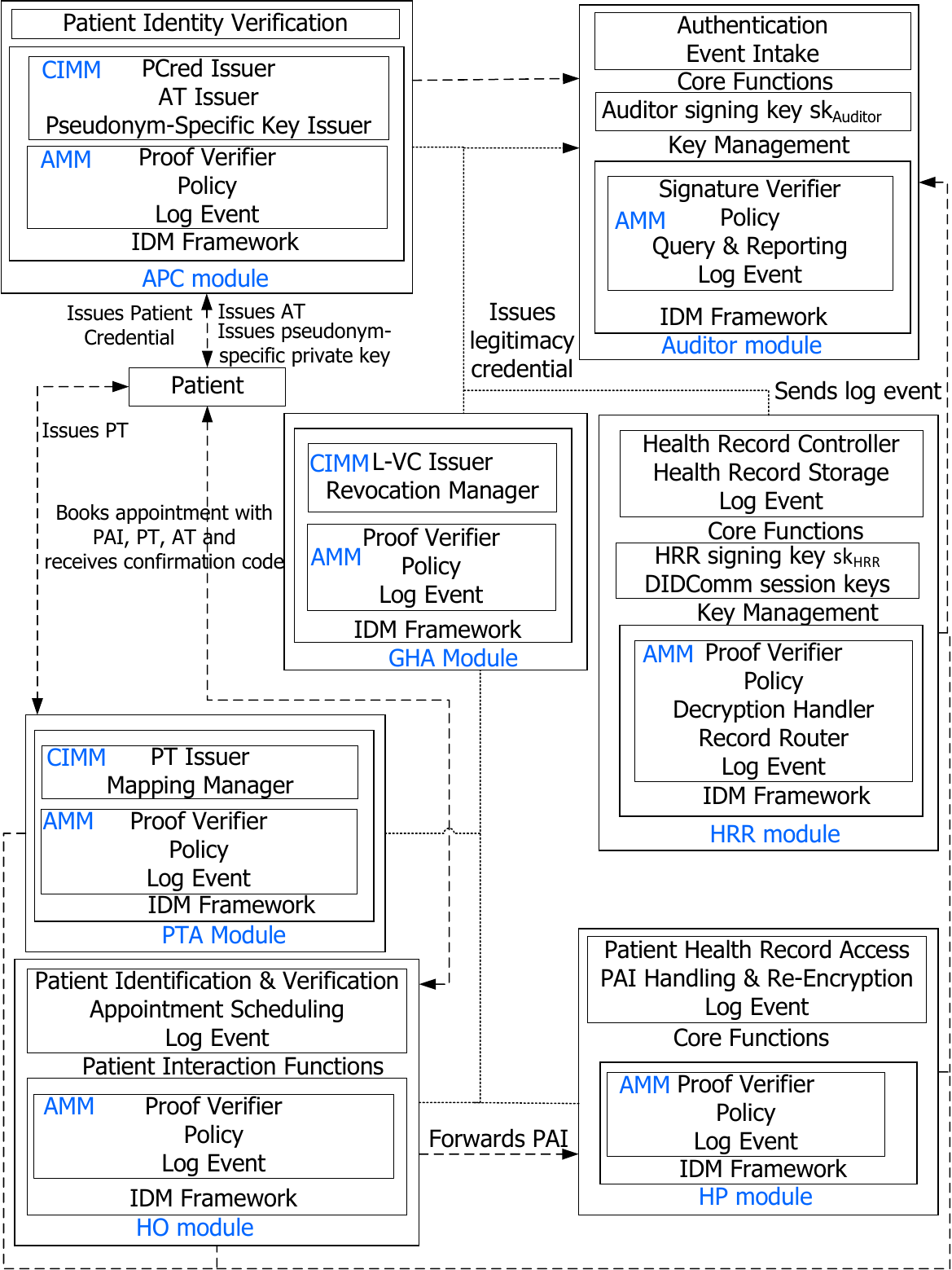}
 \caption{Architecture of the HIDM framework}
 \label{fig:block_diagram_2}
\end{figure*}

All components are implemented as software-based modules accessible to patients, healthcare professionals, and administrative staff. For example, patients interact with the APC, PTA, healthcare organization, HRR, and Auditor modules; healthcare professionals use the Healthcare Professional Module; and administrative staff use the Healthcare Organization Module for functions such as identity verification at check-in. Modules operate as automated services implemented as servers, APIs, or smart contracts without direct human intervention. To ensure interoperability with existing healthcare systems, the modules exchange data through secure, standardized APIs that align with established healthcare interoperability frameworks such as HL7 FHIR. Figure~\ref{fig:block_diagram_2} shows the block diagram of the HIDM framework, illustrating patient interactions with the APC, GHA, Healthcare Providers, HRR, and Auditor. Their individual functional roles are summarized below:

\begin{itemize}
    \item \textbf{GHA Module:} Serves as the root of trust, verifying the legitimacy and operational authority of all components. It issues digitally signed Legitimacy Verifiable Credentials $\mathsf{L\!-\!VC}$ to authorized entities (e.g., APC, HO, PTA).

    \item \textbf{APC Module:} Manages patient identity, issues Patient Credentials after verifying identity documents (Section~\ref{subsubsec:patient_credential_issuance}), generates pseudonym-specific private keys (Section~\ref{par:pseudonym-private-key}), and issues Appointment Tokens (Section~\ref{subsubsec:appointment-token-issuance}).

    \item \textbf{PTA Module:} Issues digitally signed Pseudonym Tokens that bind a patient's pseudonym to their healthcare-specific identifier (Section~\ref{par:pseudonym-token-issuance}).

    \item \textbf{Healthcare Provider Module:} Represents both healthcare organizations and healthcare professionals:
    \begin{itemize}
        \item \textbf{Healthcare Organization Module:} Represents clinics, hospitals, pharmacies, and diagnostic centers. Manages patient appointments and credential verification, enabling healthcare professionals to access authenticated health records.
        \item \textbf{Healthcare Professional Module:} Allows healthcare professionals to generate and update health records during patient interactions.
    \end{itemize}

    \item \textbf{HRR Module:} Maintains longitudinal health records with controlled access through two sub-modules:
    \begin{itemize}
        \item \textbf{Health Record Controller:} Processes and authorizes health record access requests. It exposes a RESTful API consistent with HL7 FHIR interoperability standards, enabling authorized healthcare professionals to query, retrieve, and update health data using standard FHIR resources (e.g., \texttt{Patient}, \texttt{Observation}, \texttt{DiagnosticReport}).
        
        \item \textbf{Health Record Storage:} Ensures secure health record storage and availability to authorized users.
    \end{itemize}

    \item \textbf{Auditor Module:} Ensures accountability and transparency by providing patients with access to their own interaction logs (\texttt{PatientAccessible} records) and granting the Auditor Authority full visibility into broader event records (\texttt{AuditorAuthorityAccessible}) for oversight, regulatory compliance, and dispute resolution (Section~\ref{subsubsec:immutable-ledgers-healthcare-idm}). When a request for log data is received, the Auditor Module verifies the requester’s identity and role through credential-based authentication: a patient authenticates using their Patient Credential ($\textsf{PCred}$), while institutional entities such as the APC, PTA, and Auditor Authority authenticate using their GHA-issued Legitimacy Verifiable Credential ($\textsf{L-VC}$). The module’s access-control logic, implemented via a smart contract, validates both the credential type and the associated role attributes before releasing data, thereby ensuring that the distinction between patient and authority is cryptographically verifiable and technically enforced.

\end{itemize}

These components rely on permissioned immutable ledgers, which ensure decentralization, verifiability, and auditability. The following section details the roles of these ledgers.

\subsubsection{Immutable Ledgers Supporting the HIDM Framework}
\label{subsubsec:immutable-ledgers-healthcare-idm}

The HIDM framework utilizes permissioned, immutable ledgers accessible via smart contracts to support Decentralized Identity Management (DIDM), enforce the single-use property of tokens, and provide accountability through transparent event logging. The ledgers are as follows:

\begin{itemize}
    \item \textbf{DID Document Ledger:} Stores each entity DID document to enable distributed public key discovery, a core function of the Decentralized Public Key Infrastructure (DPKI). Entities query the ledger to obtain up-to-date cryptographic data (e.g., public keys, service endpoints) for secure communications and credential verification.

    \item \textbf{Appointment Token Usage Ledger:} Records the unique identifier of every Appointment Token (Section~\ref{subsubsec:appointment-token-issuance}) used for appointment scheduling. This ensures tokens are single-use and prevents replay attacks.

    \item \textbf{Auditor Ledger:} Logs events generated by core modules in the healthcare ecosystem (e.g., Healthcare Organization, Healthcare Professional, APC, PTA, HRR). Each record follows a standardized structure:
    \begin{itemize}
        \item \textbf{logID:} Unique event identifier.
        \item \textbf{timestamp:} Date and time of the event.
        \item \textbf{originModule:} Module responsible for the event (e.g., APC, Healthcare Organization).
        \item \textbf{eventType:} Event descriptor (e.g., \texttt{PatientCredentialIssuance}, \texttt{HealthRecordRead}, \texttt{AppointmentBooked}).
        \item \textbf{accessLevel:} Access category (\texttt{PatientAccessible} or \texttt{AuditorAuthorityAccessible}).
        \item \textbf{patientIdentifier:} Healthcare-specific patient identifier or pseudonym.
        \item \textbf{healthcareProfessionalIdentifier:} Identifier for the healthcare professional, if applicable.
        \item \textbf{eventDetails:} Additional event data (e.g., ATI used, confirmation code, disclosed attributes).
    \end{itemize}
    
    Events are categorized by the \texttt{accessLevel} field as follows:
    \begin{itemize}
        \item \textbf{Patient-Accessible Event Records:} Provide patients with transparency into their healthcare interactions, such as:
        \begin{itemize}
            \item Appointment booking events.
            \item Identity verification events.
            \item Health record access (view or modification) by healthcare professionals.
        \end{itemize}
    
        \item \textbf{Auditor Authority-Accessible Event Records:} Provide the Auditor authority with information for oversight, compliance, and dispute resolution, including:
        \begin{itemize}
            \item Patient Credential issuance events.
            \item Pseudonym Token issuance events.
            \item Appointment Token issuance events.
            \item Health record access events by patients and healthcare professionals.
        \end{itemize}
    \end{itemize}
\end{itemize}

Table~\ref{tab:ledger-access} summarizes the scope of visibility across the different ledgers. In all cases, query metadata (e.g., requester identity, timestamp, verification outcome) is restricted to administrators and is not publicly visible.

\begin{table}[h]
\centering
\caption{Access Control Policies for Immutable Ledgers}
\label{tab:ledger-access}
\begin{tabular}{p{1.5cm} p{2cm} p{3cm}}\\
\hline
\textbf{Ledger} & \textbf{Authorized Writer} & \textbf{Authorized Reader} \\
\hline
\textbf{DID Document Ledger} & The DID Subject (e.g., Patient, HO, APC) for their own DID Document. & Public (all entities can resolve DID Documents). \\
\hline
\textbf{Appointment Token Usage Ledger} & The Healthcare Organization (HO), upon successful booking of an appointment. & Query-limited: Authorized entities (e.g., HO) may check the existence of a specific ATI but cannot browse the full ledger. \\
\hline
\textbf{Auditor Ledger} & Core system entities (APC, PTA, HO, HP, HRR) log their own actions. &
\begin{tabular}[t]{@{}l@{}}
\textbf{Tiered Access:} \\
1. Patients can access their own \\ \texttt{PatientAccessible} records. \\
2. The Auditor Authority can access all \\\texttt{AuditorAuthorityAccessible} \\ records 
 for oversight and compliance.
\end{tabular} \\
\hline
\end{tabular}
\end{table}

\subsection{Patient Interaction with HIDM Framework} \label{subsec:patient-interaction-healthcare-focused-idm}
This subsection presents the end-to-end patient journey within the HIDM framework, outlining the sequence of interactions required to access healthcare services. The process begins with the patient generating multiple sets of private-public pairs for distinct interactions (Section \ref{subsubsec:key-management-unlinkability}). Next, the patient acquires a Patient credential to prove the legitimacy of the patient within the healthcare ecosystem (Section \ref{subsubsec:patient_credential_issuance}). Then, the patient uses the Patient Credential to acquire a Pseudonym Token and a Pseudonym-Specific Private Key to enable unlinkable authentication (Section \ref{subsubsec:issuance-pseudonym-token-key}). The patient further acquires a single-use Appointment Token to reserve an appointment at the healthcare organization (Section \ref{subsubsec:appointment-token-issuance}). Finally, the patient utilizes these tokens to access healthcare services (Section \ref{subsubsec:patient_access_healthcare_provider}). 

\subsubsection{Patient Key Management}
\label{subsubsec:key-management-unlinkability}

The HIDM's strategy to manage patient keys is designed to provide robust unlinkability by default while empowering patients with control over the generation of keys. The strategy involves the patient managing several categories of cryptographic key pairs within a secure digital wallet. Each category is bound to a distinct function, and separation ensures that keys are not reused across unrelated contexts, thereby preventing linkability. 

\begin{enumerate}

  \item \textbf{Patient Identity Key Pair} 
  $(\mathsf{sk}_{\text{patient-L}}, \mathsf{pk}_{\text{patient-L}})$: \\
  \textbf{Duration:} Long-lived (persistent across the entire healthcare ecosystem). \\
  \textbf{Usage:} Anchors the patient’s cryptographic identity. The public key $\mathsf{pk}_{\text{patient-L}}$ is embedded in the $\mathsf{PatientCredential}$ to prove ownership by the patient (see Section~\ref{subsubsec:patient_credential_issuance}). 

  \item \textbf{Pseudonym Generation Key Pair} 
  $(\mathsf{sk}_{\mathsf{Pgen}}, \mathsf{pk}_{\mathsf{Pgen}})$: \\
  \textbf{Duration:} Healthcare-organization-specific and renewable. A fresh key pair can be generated for each healthcare organization ($\mathsf{HO}_i$). \\
  \textbf{Usage:} Generates the patient pseudonym $P_{\mathsf{patient}}$ and the re-encryption key 
  $rk_{\mathsf{patient} \rightarrow \mathsf{HRR}}$ for the Health Record Repository ($\mathsf{DID}_{\mathsf{HRR}}$) (see Section~\ref{par:generation-pseudonym}). The public key $\mathsf{pk}_{\mathsf{Pgen}}$ is embedded in the pseudonym.

  \item \textbf{Pseudonym-Specific Signing Key} 
  $(\mathsf{sk}_{P_{\mathsf{patient}}})$: \\
  \textbf{Duration:} Healthcare-organization-specific and renewable. \\
  \textbf{Usage:} Generates unlinkable digital signatures under a pseudonym (see Section~\ref{par:pseudonym-private-key}). These signatures authenticate requests such as $\mathsf{AppointmentScheduleRequest}$ and $\mathsf{IdentityVerificationRequest}$ (see Section~\ref{subsubsec:patient_access_healthcare_provider}).

  \item \textbf{DIDComm Communication Keys}: \\
  \textbf{Duration:} Ephemeral (rotated per session). \\
  \textbf{Usage:} Generated independently of all other keys, and the corresponding public keys are included in the patient’s DID Document. Used exclusively for establishing mutually authenticated DIDComm channels with the APC ($\mathsf{DID}_{\mathsf{APC}}$), the PTA ($\mathsf{DID}_{\mathsf{PTA}}$), and healthcare providers ($\mathsf{DID}_{\mathsf{HO}_i}$). 

\end{enumerate}

Patients may choose to reuse the same pseudonym and signing key within a healthcare organization to benefit from continuity features such as priority booking or loyalty services. Alternatively, they may generate new pseudonyms and keys for repeated visits, even within the same organization, if stronger unlinkability across interactions is desired.

When accessing services from a healthcare organization, the patient employs these keys in sequence: the long-lived identity key pair anchors the $\mathsf{PatientCredential}$, a fresh pseudonym generation key pair and pseudonym-specific signing key are scoped to the target organization (may be rotated across visits), and a distinct DIDComm key pair is generated for each secure channel established with the APC, PTA, and the healthcare organization itself.

\subsubsection{Issuance of Patient Credential}
\label{subsubsec:patient_credential_issuance}

The Patient Credential is a verifiable and trusted digital identity for the patient, issued by the Agency for PatientCare (APC). It enables the patient to access healthcare services across multiple providers (e.g., health clinics) through the HIDM framework. The steps for Patient Credential issuance are described below:

\begin{itemize}

    \item \textbf{Legitimacy Verification:} 
    The APC demonstrates its legitimacy to the patient by presenting its GHA-issued $\mathsf{L\!-\!VC}_{\mathrm{APC}}$. This credential is digitally signed by the GHA; the patient verifies the signature using the GHA’s public key (pre-distributed or published on the DID Document Ledger). 
    
    \item \textbf{Submission of Personal Information:} 
    The patient submits their real-world Personally Identifiable Information (PII) (e.g., name, address, biometric data, and a government-issued ID) to the APC.

    \item \textbf{Verification by APC:} 
    The APC verifies the received PII and cross-checks the submitted details for consistency.
    
    \item \textbf{Credential Generation and Signature:} 
    Upon successful verification, the APC generates the Patient Credential and digitally signs it using the CL signature scheme. The format of the Patient Credential is defined as:
    $$
    \begin{aligned}
    \mathsf{PCred} =\ &
    \big[ \big( 
        \mathsf{CredentialID},\ 
        \mathsf{DID}_{\mathsf{Patient\_L}},\ 
        \mathsf{PatientID}, \\
    &\quad
        \mathsf{IssueDate},\ 
        \mathsf{BioHash},\ 
        \mathsf{DID}_{\mathsf{APC}}
    \big) \big],\ 
    \mathsf{Sig}_{\mathsf{APC}}
    \end{aligned}
    $$

    where:
    \begin{itemize}
        \item $ \mathsf{CredentialID} $: Unique identifier of the  Patient Credential 
        \item $ \mathsf{DID}_{\mathsf{Patient\_L}} $: Long-lived DID generated by the patient and bound to the PatientCredential
        \item $ \mathsf{PatientID} $: Healthcare-specific patient identifier
        \item $ \mathsf{IssueDate} $: Date of issuance of the credential
        \item $ \mathsf{BioHash} $: Probabilistic hash of the patient's biometric data (e.g., facial features) (Section~\ref{subsubsec:identity-credential-role-idm})
        \item $ \mathsf{DID}_{\mathsf{APC}} $: DID of the APC
        \item $ \mathsf{Sig}_{\mathsf{APC}} $: Digital signature over all listed fields, generated by the APC using the CL signature scheme
    \end{itemize}

    \item \textbf{Event Logging and Auditability:} 
    The APC maintains a secure database~\cite{Mousa2020} to store the association between the patient’s PII and the PatientID of the issued Patient Credential. It also records the credential issuance event in the Auditor Ledger using $ \mathsf{PatientID} $ as $ \mathsf{patientIdentifier} $, with relevant $ \mathsf{eventDetails} $ such as $ \mathsf{CredentialID} $ and issuance timestamp (Section~\ref{subsec:architecture-healthcare-idm}).

\end{itemize}

Figure~\ref{fig:patient_credential_issuance} shows the sequence diagram of Patient Credential issuance. It presents how the patient provides identity attributes and proofs while the APC verifies and signs the credential.

\begin{figure}[h]
    \centering
    \includegraphics[width=8cm, angle=0, trim={0cm 0cm 0cm 0cm}, clip]{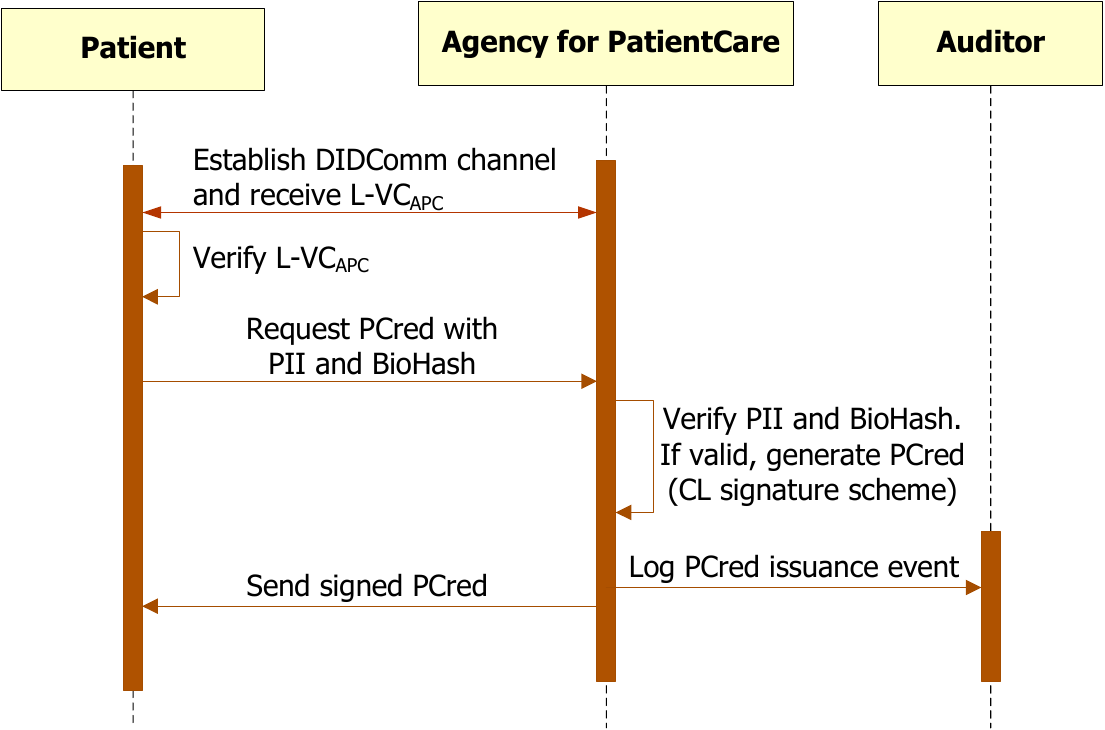}
    \caption{Sequence diagram illustrating the issuance of a Patient Credential}
    \label{fig:patient_credential_issuance}
\end{figure}

\subsubsection{Issuance of Pseudonym Token and Pseudonym-Specific Private Key} \label{subsubsec:issuance-pseudonym-token-key}

The pseudonym is a fundamental element of the HIDM framework. It enables unlinkable interactions with healthcare providers while supporting consistent indexing of patient records in the HRR. For each appointment booking, the patient derives a fresh visit-scoped pseudonym $\mathsf{P}_{\mathsf{patient}}$, ensuring that repeated visits to the same healthcare organization cannot be linked by default. The issuance process comprises three steps, each detailed in the following subsections: the generation of a patient pseudonym that is cryptographically linked to the healthcare-specific patient identifier in Section~\ref{par:generation-pseudonym}, the issuance of a digitally signed Pseudonym Token by the PTA, certifying the binding between the pseudonym and an authenticated patient in Section~\ref{par:pseudonym-token-issuance}, and the issuance of a pseudonym-specific private key to the patient, which enables traceable yet unlinkable authentication within the healthcare ecosystem in Section~\ref{par:pseudonym-private-key}.

\paragraph{Generation of the Pseudonym}
\label{par:generation-pseudonym}

A key challenge in the HIDM framework is enabling consistent indexing of patient records in the HRR using a persistent healthcare-specific identifier while preventing its disclosure to healthcare providers. Naïve approaches, such as hashing or deterministic pseudonyms, fail because they create globally persistent identifiers visible to all parties, enabling cross-institution correlation and long-term tracking.

To address this challenge, the framework employs a Proxy Re-Encryption (PRE)~\cite{mambo1997proxy} scheme, applied within the HIDM context to enable domain transformation without revealing the underlying identifier. In essence, the patient encrypts their healthcare-specific identifier once and delegates controlled re-encryption to the HRR. The design ensures that the HRR can maintain consistent longitudinal records, while healthcare organizations and other intermediaries remain unable to link records across domains.

The patient generates a pseudonym $ \mathsf{P}_{\mathsf{patient}} $ encapsulating their encrypted identifier, together with a re-encryption key $ \mathsf{rk}_{\mathsf{patient} \rightarrow \mathsf{HRR}} $. The re-encryption key allows a healthcare organization to transform $ \mathsf{P}_{\mathsf{patient}} $ into an HRR-specific pseudonym $ \mathsf{P}_{\mathsf{HRR}} $ without learning the identifier itself. The HRR, using its private key $ \mathsf{sk}_{\mathsf{HRR}} $, can then recover the encrypted identifier to index longitudinal records but never the patient’s real-world identity. Control over pseudonym generation and re-encryption delegation resides entirely with the patient, preserving autonomy and unlinkability across healthcare providers. To reinforce unlinkability, the patient may generate new pseudonyms for different healthcare organizations or rotate them for repeated visits to the same organization. This prevents cross-provider correlation while still allowing the HRR to maintain consistent and complete longitudinal records. Full technical details are provided below.

\paragraph*{Proxy Re-Encryption Assumptions}
The HIDM framework relies on a unidirectional, single-hop, non-interactive PRE scheme based on bilinear pairings. Specifically, the design assumes a pairing-based PRE family modeled after the construction by Ateniese et al. ~\cite{ateniese2006improved}, which achieves indistinguishability under chosen-plaintext attacks (IND-CPA). The PRE mechanism is employed under the following cryptographic assumptions:

\begin{itemize}
\item \textbf{Unidirectionality:} The re-encryption key $\mathsf{rk}_{\mathsf{patient} \rightarrow \mathsf{HRR}}$ enables a one-way transformation from the patient’s domain to the HRR domain. This prevents the HRR from deriving the patient’s secret key or performing inverse transformations beyond its authorized scope.
\item \textbf{Single-Hop:} To prevent unauthorized re-encryption chaining and unintended linkability, a ciphertext re-encrypted for the HRR cannot be further re-encrypted for any other entity.
\item \textbf{Non-Interactivity:} The patient can generate the re-encryption key independently using the HRR’s public key $\mathsf{pk}_{\mathsf{HRR}}$, ensuring user autonomy without requiring an interactive key exchange with the HRR.
\end{itemize}

\paragraph*{Integration of Proxy Re-Encryption for Pseudonym Management}
\label{subpar:healthcare_proxy_reencryption}

The integration of the PRE scheme into the proposed HIDM framework follows its standard construction, which consists of the following steps: 

\begin{itemize}

    \item \textbf{Public Parameters:}  
    The public parameters for the scheme are defined within a bilinear pairing setting \cite{boneh2001identity}. Let $ E / \mathbb{F}_q $ be an elliptic curve defined over a finite field $ \mathbb{F}_q $ of order $ q $. Let $ G_1 \subseteq E(\mathbb{F}_q) $ be an additive cyclic subgroup of order $ r $, and $ G_2 \subseteq E'(\mathbb{F}_q) $ be another additive cyclic subgroup of the same order $ r $, defined either on a distinct curve $ E' $ or on a separate subgroup to ensure non-isomorphism with $ G_1 $. Let $ g_1 $ and $ g_2 $ be generators of $ G_1 $ and $ G_2 $. Let $ G_T $ be a multiplicative cyclic subgroup of order $ r $ contained in the finite-field extension $ \mathbb{F}_{q^k}^* $, where $ k $ is the embedding degree satisfying $ r \mid (q^k - 1) $. This construction is necessary because the bilinear pairing $ e : G_1 \times G_2 \rightarrow G_T $ requires a target group whose order divides $ q^k - 1 $; consequently, $ G_T $ resides in the larger field $ \mathbb{F}_{q^k}^* $. The scalar field $ \mathbb{Z}_r $ consists of integers modulo $ r $ from which private keys, exponents, and random values are selected. The pairing result $ z = e(g_1, g_2) \in G_T $.

    \item \textbf{Key Generation:} \\
    Patient (pseudonym generation): $\mathsf{sk}_{\mathsf{patient}} = x \in \mathbb{F}_r$, $\mathsf{pk}_{\mathsf{patient}} = x g_2 \in G_2$. \\[4pt]
    HRR: $\mathsf{sk}_{\mathsf{HRR}} = y \in \mathbb{F}_r$, $\mathsf{pk}_{\mathsf{HRR}} = y g_1 \in G_1$.

    \item \textbf{Pseudonym Generation:}  
     The steps of the pseudonym generation process are provided below:  
     
    \begin{itemize}
    
        \item Map the healthcare-specific identifier $ \mathsf{PatientID} $ to $ \mathbb{F}_r $ using a hash-to-field:
        $$
            h = \mathsf{HashToField}(\mathsf{PatientID}) \in \mathbb{F}_r.
        $$
        
        \item Encode it in $G_T$ via the pairing result:
        $$
            \mathsf{PatientID}_{G_T} = z^{h} \in G_T.
        $$
        
        \item A symmetric encryption key $k$ is derived using the HMAC-based Extract-and-Expand Key Derivation Function (HKDF)~\cite{rfc5869}. Specifically, a fixed 32-byte system-wide salt $\mathsf{salt}$ and a context-specific information string \texttt{HIDM-\allowbreak PRE-\allowbreak PatientID-\allowbreak Derivation-\allowbreak v1} are used:
        $$
        k = \mathsf{HKDF}\big(
        \mathsf{salt},
        \mathsf{enc}(\mathsf{PatientID}_{G_T}),
        \mathsf{info}
        \big)
        $$
        
        \item Encrypt the original identifier under $k$ using an authenticated encryption scheme (e.g., AES-GCM~\cite{NIST800-38D}):
        $$
            \mathsf{ct}_{\mathsf{PatientID}} = \mathsf{Enc}(\mathsf{PatientID}, k).
        $$
        
        \item Select a random $ r \in \mathbb{F}_r $ and compute the pseudonym components:
        $$
        \begin{aligned}
            \mathsf{P}_{\mathsf{patient},1} 
                &= z^r \cdot \mathsf{PatientID}_{G_T} \in G_T, \\
            \mathsf{P}_{\mathsf{patient},2} 
                &= r\,\mathsf{pk}_{\mathsf{patient}}
                 = r(x g_2) \in G_2.
        \end{aligned}
        $$

        \item Form the pseudonym:
        $$
            \mathsf{P}_{\mathsf{patient}} = (\mathsf{P}_{\mathsf{patient},1},\, \mathsf{P}_{\mathsf{patient},2}).
        $$
        
        \item Compute the re-encryption key for the HRR:
        $$
        	\mathsf{rk}_{\mathsf{patient} \rightarrow \mathsf{HRR}} = \tfrac{1}{x}\,\mathsf{pk}_{\mathsf{HRR}} 
        	= \tfrac{1}{x}(y g_1) = \tfrac{y}{x}\,g_1 \in G_1.
        $$

        \item Send the Pseudonym Access Information (PAI) to the healthcare organization:
        $$
            \mathsf{PAI} = \big( \mathsf{P}_{\mathsf{patient}},\ \mathsf{rk}_{\mathsf{patient} \rightarrow \mathsf{HRR}},\ \mathsf{ct}_{\mathsf{PatientID}} \big).
        $$
    \end{itemize}

    \item \textbf{Healthcare Organization Transformation:} 
    Verify the re-encryption key via:
    $$
        e(\mathsf{rk}_{\mathsf{patient} \rightarrow \mathsf{HRR}}, \mathsf{pk}_{\mathsf{patient}}) \stackrel{?}{=} e(\mathsf{pk}_{\mathsf{HRR}}, g_2).
    $$
    Upon success, transform the patient pseudonym into an HRR-specific pseudonym:
    $$
    \begin{aligned}
    \mathsf{P}_{\mathsf{HRR}} &=
    \big( \mathsf{P}_{\mathsf{patient},1},\, e(\mathsf{rk}_{\mathsf{patient} \rightarrow \mathsf{HRR}},\, \mathsf{P}_{\mathsf{patient},2}) \big) \\
    &= \big( z^r \cdot \mathsf{PatientID}_{G_T},\, z^{r y} \big).
    \end{aligned}
    $$

    \item \textbf{HRR Recovery:}  
    Recover $ \mathsf{PatientID}_{G_T} $ as:
    $$
        \frac{\mathsf{P}_{\mathsf{HRR},1}}{\big(\mathsf{P}_{\mathsf{HRR},2}\big)^{1 / y}} = \mathsf{PatientID}_{G_T},
    $$
    derive:
    $$
    k = \mathsf{HKDF}\big(
    \mathsf{salt},
    \mathsf{enc}(\mathsf{PatientID}_{G_T}),
    \mathsf{info}
    \big),
    $$
    and decrypt:
    $$
        \mathsf{PatientID} = \mathsf{Dec}(\mathsf{ct}_{\mathsf{PatientID}}, k).
    $$
    
\end{itemize}

\paragraph{Issuance of Pseudonym Token}
\label{par:pseudonym-token-issuance}

In the proposed HIDM framework, patients must prove their legitimacy to healthcare organizations without exposing their persistent healthcare-specific identifier $\mathsf{PatientID}$ from the Patient Credential. Direct disclosure of $\mathsf{PatientID}$ would enable linkability across visits, thereby undermining patient privacy.

To mitigate this risk, the HIDM introduces the Pseudonym Token, a credential issued by the Pseudonym Token Authority (PTA) that certifies the binding between a patient-generated pseudonym and the $\mathsf{PatientID}$ of the Patient Credential. It enables patients to authenticate under their pseudonyms, preventing identifier-based linkability while still ensuring verifiable legitimacy.

Although the patient generates the pseudonym, it cannot be arbitrarily chosen. To ensure correctness, the patient submits a Pseudonym Binding Proof (PBP) to the PTA, a NIZK proof that demonstrates the pseudonym is derived from their $\mathsf{PatientID}$ without revealing the derivation itself. Upon successful verification, the PTA issues the pseudonym token using the Schnorr signature scheme. The Pseudonym Token generation process consists of the following steps:

\begin{itemize}

    \item \textbf{Legitimacy Verification:}  
    The PTA presents its GHA-issued $\mathsf{L\!-\!VC}_{\mathrm{PTA}}$ for the patient to verify its operational legitimacy after establishing a mutually authenticated DIDComm channel.

    \item \textbf{Submission of Proofs:}  
    The patient submits:
    \begin{itemize}
        \item Healthcare-specific patient identifier $ \mathsf{PatientID} $
        \item Proof of knowledge of the Patient Credential $ \mathsf{PoK}_{\mathsf{PCred}} $
        \item Pseudonym $ \mathsf{P}_{\mathsf{patient}} $
        \item Pseudonym Binding Proof (PBP), a NIZK proof binding $ \mathsf{P}_{\mathsf{patient}} $ to $ \mathsf{PatientID} $ (Appendix~\ref{appendix:pseudonym_binding_proof})
    \end{itemize}

    \item \textbf{Verification by PTA:}
    The PTA verifies both the $ \mathsf{PoK}_{\mathsf{PCred}} $ and the PBP through its AMM.

    \item \textbf{Token Generation and Signature:} 
    Upon successful verification, the PTA initiates the Pseudonym Token generation process. It generates a unique Pseudonym Token Identifier $ \mathsf{PTI} $ and concatenates it with the patient pseudonym $ \mathsf{P}_{\mathsf{patient}} $. It then computes the challenge $ \mathsf{c} $ using the concatenated message and the commitment of a nonce $(\mathsf{r}, \mathsf{R})$, followed by the computation of the response $ \mathsf{s} $ from the nonce $\mathsf{r}$ to finalize the signature $(\mathsf{c}, \mathsf{s})$.

    \item \textbf{Event Logging and Auditability:}  
    The PTA records the pseudonym–PatientID association in a secure database~\cite{Mousa2020} and logs the token issuance event in the Auditor Ledger using $ \mathsf{P}_{\mathsf{patient}} $ as $ \mathsf{patientIdentifier} $, with relevant $ \mathsf{eventDetails} $ such as $ \mathsf{PTI} $ and issuance timestamp (Section~\ref{subsec:architecture-healthcare-idm}).
    
\end{itemize}

\begin{figure}[h]
    \centering
    \includegraphics[width=8cm, trim={0cm 0cm 0cm 0cm}, clip]{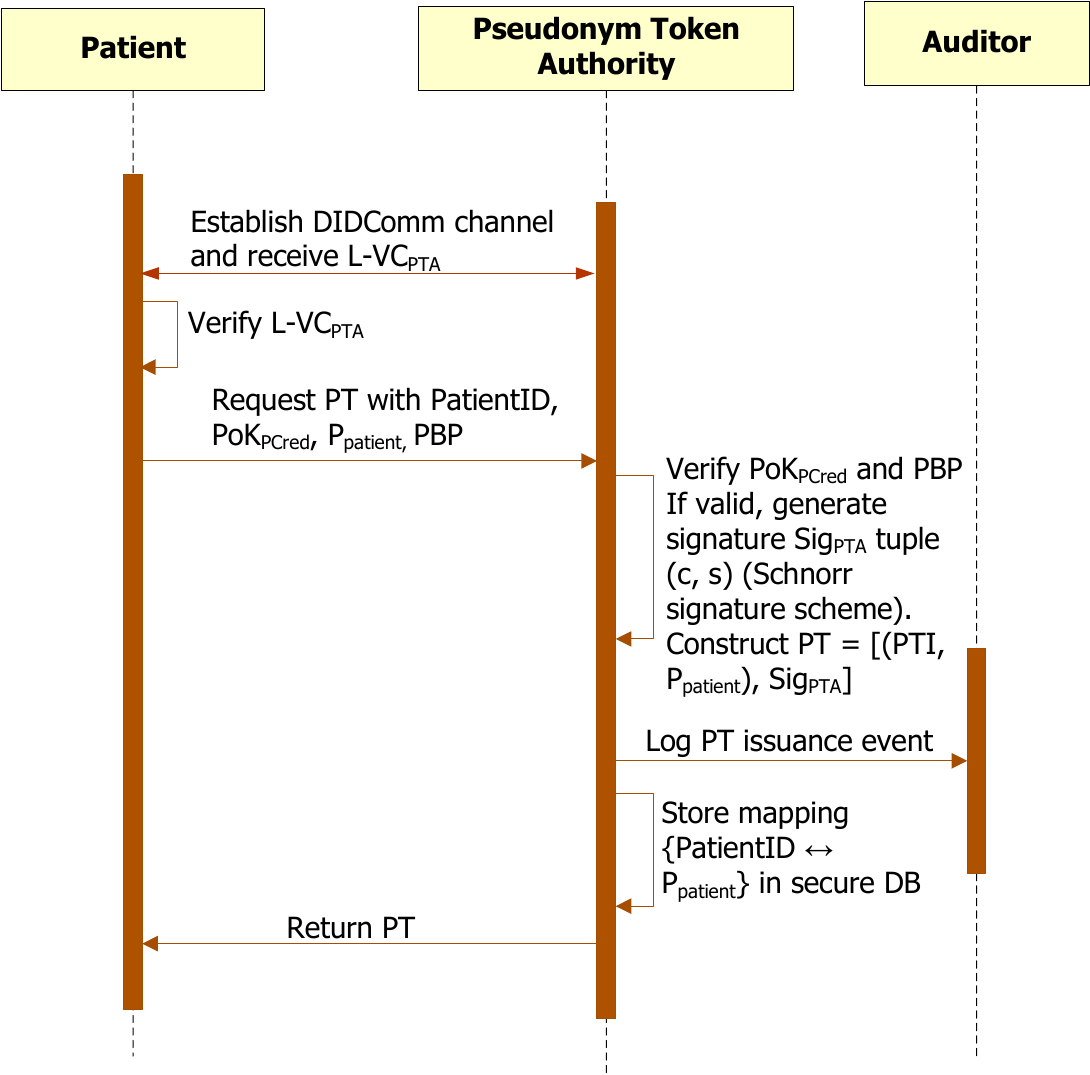}
    \caption{Sequence diagram of Pseudonym Token issuance}
    \label{fig:pseudonym_token_issuance}
\end{figure}

The Pseudonym Token format is:
$$
\mathsf{PT} = \left[ \big(\mathsf{PTI},\ \mathsf{P}_{\mathsf{patient}}\big),\ \mathsf{Sig}_{\mathsf{PTA}} \right]
$$
where:
\begin{itemize}
    \item $ \mathsf{PTI} $: Unique identifier of the Pseudonym Token
    \item $ \mathsf{P}_{\mathsf{patient}} $: Patient pseudonym
    \item $ \mathsf{Sig}_{\mathsf{PTA}} $: The signature tuple  $(\mathsf{c}, \mathsf{s})$ generated based on the Schnorr signature scheme by the PTA
\end{itemize}

Figure~\ref{fig:pseudonym_token_issuance} shows the sequence diagram of Pseudonym Token issuance. It presents how the patient submits their pseudonym together with proofs of credential possession and correct binding, while the PTA verifies the proofs and signs the pseudonym token.

\paragraph{Issuance of Pseudonym-Specific Private Key}
\label{par:pseudonym-private-key}

The pseudonym-specific private key $\mathsf{sk}_{P_{\mathsf{patient}}}$ enables the patient to authenticate actions under their pseudonym by generating verifiable digital signatures that ensure message integrity and non-repudiation, without exposing their underlying identity. To prevent the issuing authority APC from learning the pseudonym itself, the Blind IBS signature scheme is employed, allowing the APC to issue the private key without knowledge of the underlying actual pseudonym. The pseudonym-specific private key generation process consists of the following steps:

\begin{itemize}

    \item \textbf{Legitimacy Verification:}  
    The patient may optionally verify the APC's GHA-issued $\mathsf{L\!-\!VC}_{\mathrm{APC}}$ to confirm that the APC's DID has not been revoked before establishing a mutually authenticated DIDComm channel.

    \item \textbf{Blinded Pseudonym Submission:}  
    The patient blinds their pseudonym to obtain $P_{\mathsf{patient}}^{\prime}$ and sends it to the APC along with:
    \begin{itemize}
        \item Healthcare-specific patient identifier $ \mathsf{PatientID} $
        \item Proof of knowledge of the Patient Credential $ \mathsf{PoK}_{\mathsf{PCred}} $
    \end{itemize}

    \item \textbf{Proof Verification and Blinded Key Generation:}  
    The APC verifies the submitted proof using the AMM of its IDM framework and generates the blinded private key $\mathsf{sk}_{P_{\mathsf{patient}}^{\prime}}$ corresponding to $P_{\mathsf{patient}}^{\prime}$. The blinded key is returned to the patient.

    \item \textbf{Unblinding:}  
    The patient unblinds $\mathsf{sk}_{P_{\mathsf{patient}}^{\prime}}$ to obtain the true pseudonym-specific private key $\mathsf{sk}_{P_{\mathsf{patient}}}$.

    \item \textbf{Event Logging and Auditability:}  
    The APC records the issuance event in the Auditor Ledger using $ \mathsf{P}_{\mathsf{patient}} $ as $ \mathsf{patientIdentifier} $, with relevant $ \mathsf{eventDetails} $ such as issuance timestamp (Section~\ref{subsec:architecture-healthcare-idm}).

\end{itemize}

\begin{figure}[h]
    \centering
    \includegraphics[width=8cm, trim={0cm 0cm 0cm 0cm}, clip]{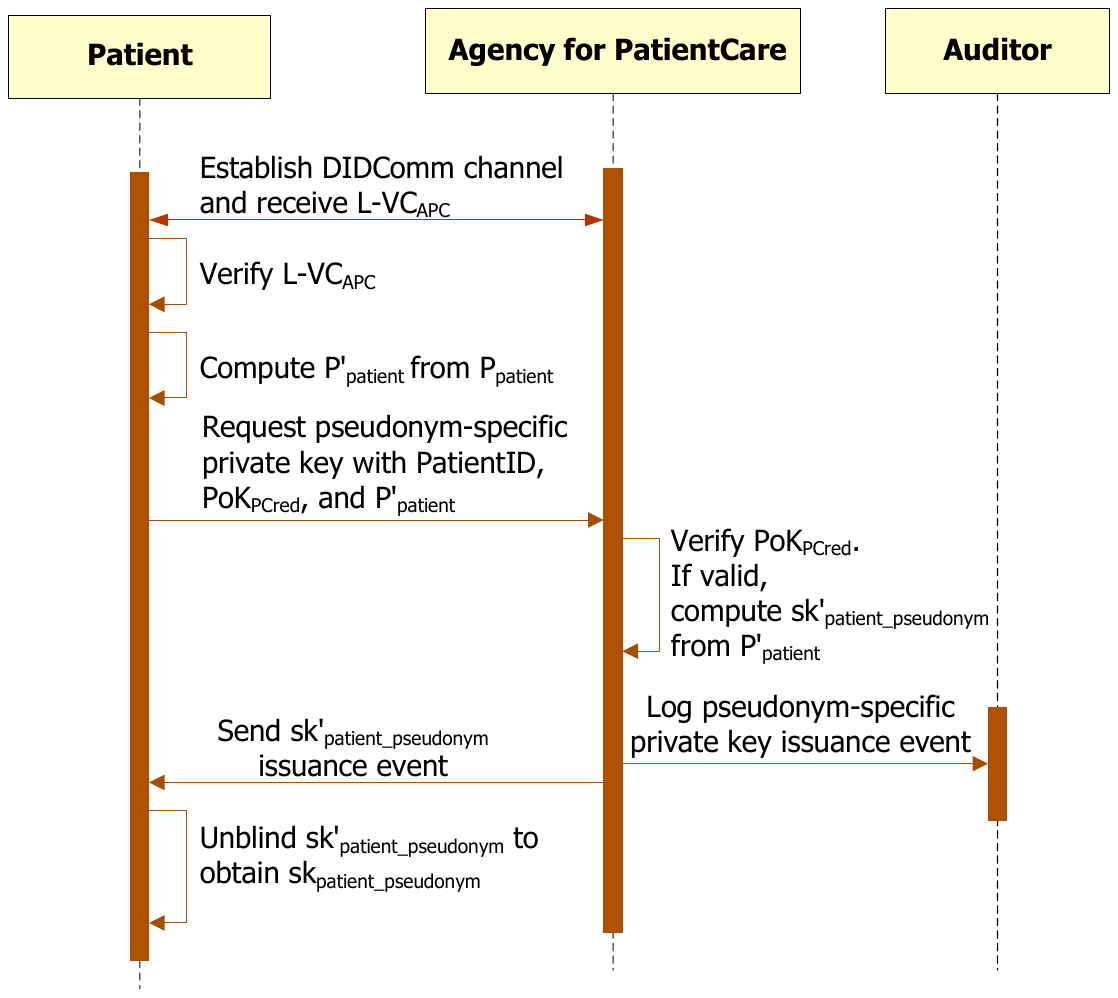}
    \caption{Sequence diagram of pseudonym-specific private key issuance}
    \label{fig:pseudonym_privatekey_issuance}
\end{figure}

\subsubsection{Issuance of Appointment Token}
\label{subsubsec:appointment-token-issuance}

The Appointment Token is generated by the APC using the Partially Blind Schnorr Signature scheme, which allows the patient to keep their self-generated Appointment Token Identifier (ATI) hidden from the APC while simultaneously enabling the APC to embed the expiration date-time into a single verifiable signature.

In this scheme, the patient’s self-generated Appointment Token Identifier (ATI) remains hidden from the APC, ensuring that the APC cannot later correlate appointments through the ATI. At the same time, the APC embeds the expiration date-time into a single verifiable signature, preventing token misuse beyond the permitted window.

\begin{itemize}

    \item \textbf{Legitimacy Verification:} 
    The patient may optionally re-verify the $\mathsf{L\!-\!VC}_{\mathrm{APC}}$ to ensure the APC's DID has not been revoked before establishing a mutually authenticated DIDComm channel.

    \item \textbf{Signature Protocol Initiation:}  the patient sends a signature request. The APC responds by generating a nonce $r$ and sending its commitment $R = g^r \pmod{p}$ to the patient.

    \item \textbf{ATI Generation and Commitment:} 
    The patient generates an Appointment Token Identifier (ATI) using the UUIDv4 function~\cite{davis2024rfc}, which produces a 128-bit pseudorandom identifier with negligible collision probability. 
    
    The patient computes the ATI commitment $\mathsf{commit}_{\mathsf{ATI}}$ using a nonce $t$. Next, the patient computes its challenge $cu_{\mathsf{patient}}^{\prime}$ using the blinded commitment of the APC's nonce $R^{\prime}$ and the ATI commitment $\mathsf{commit}_{\mathsf{ATI}}$. Then, the patient computes the blinded challenge $cu_{\mathsf{patient}}$.

    \item \textbf{Submission to APC:} 
    The patient submits the blinded challenge $cu_{\text{patient}}$, the $\mathsf{PatientID}$, and a proof of knowledge of the Patient Credential $\mathsf{PoK}_{\mathsf{PCred}}$ to the APC.

    \item \textbf{Verification by APC:} 
    The APC verifies $\mathsf{PoK}_{\mathsf{PCred}}$ using the AMM of its IDM framework.

    \item \textbf{Token Generation and Signature:} 
    Upon successful verification, the APC sets the expiration date-time, $\mathsf{exp}$, for the Appointment Token. The APC computes its challenge $cs_{\mathsf{APC}}$ using its public key $\mathsf{Y}_{\mathsf{APC}}$ and $\mathsf{exp}$.
    
    Then, it computes an intermediate combined challenge $c' = cu_{\mathsf{patient}} + cs_{\mathsf{APC}}$. Next, it computes the blinded response $s^{\prime} = r + c' \cdot x \pmod{q}$. It sends $s^{\prime}$ and $\mathsf{exp}$ to the patient.
    
    The patient unblinds $s^{\prime}$ to get $s$, and reconstructs the APC's challenge $cs_{\mathsf{APC}}$. Finally, the patient computes the final challenge $c = cu_{\mathsf{patient}}^{\prime} + cs_{\mathsf{APC}}$ to finalize the signature $(s, c, t)$.

    \item \textbf{Token Structure:} 
    The format of the Appointment Token is:
    $$
    \mathsf{AT} = \left[ \big(\mathsf{ATI},\ \mathsf{exp}\big),\ \mathsf{Sig}_{\mathsf{APC}} \right]
    $$
    where:
    \begin{itemize}
        \item $\mathsf{ATI}$: Appointment Token Identifier
        \item $\mathsf{exp}$: Expiration date-time
        \item $\mathsf{Sig}_{\mathsf{APC}}$: The signature tuple (c, s, t) generated based on the Partially blind schnorr signature scheme by the APC
    \end{itemize}

    \item \textbf{Event Logging and Auditability:} 
    The APC records the Appointment Token issuance event in the Auditor Ledger using $ \mathsf{P}_{\mathsf{patient}} $ as $ \mathsf{patientIdentifier} $, with relevant $ \mathsf{eventDetails} $ such as $ \mathsf{ATI} $, expiration time, and issuance timestamp (Section~\ref{subsec:architecture-healthcare-idm}).

\end{itemize}

Figure~\ref{fig:appointment_token_issuance} shows the sequence diagram of Appointment Token issuance. It presents how the patient binds the ATI and proves credential possession, while the APC embeds the expiration date and signs the one-time appointment token.

\begin{figure}[h]
    \centering
    \includegraphics[width=8cm]{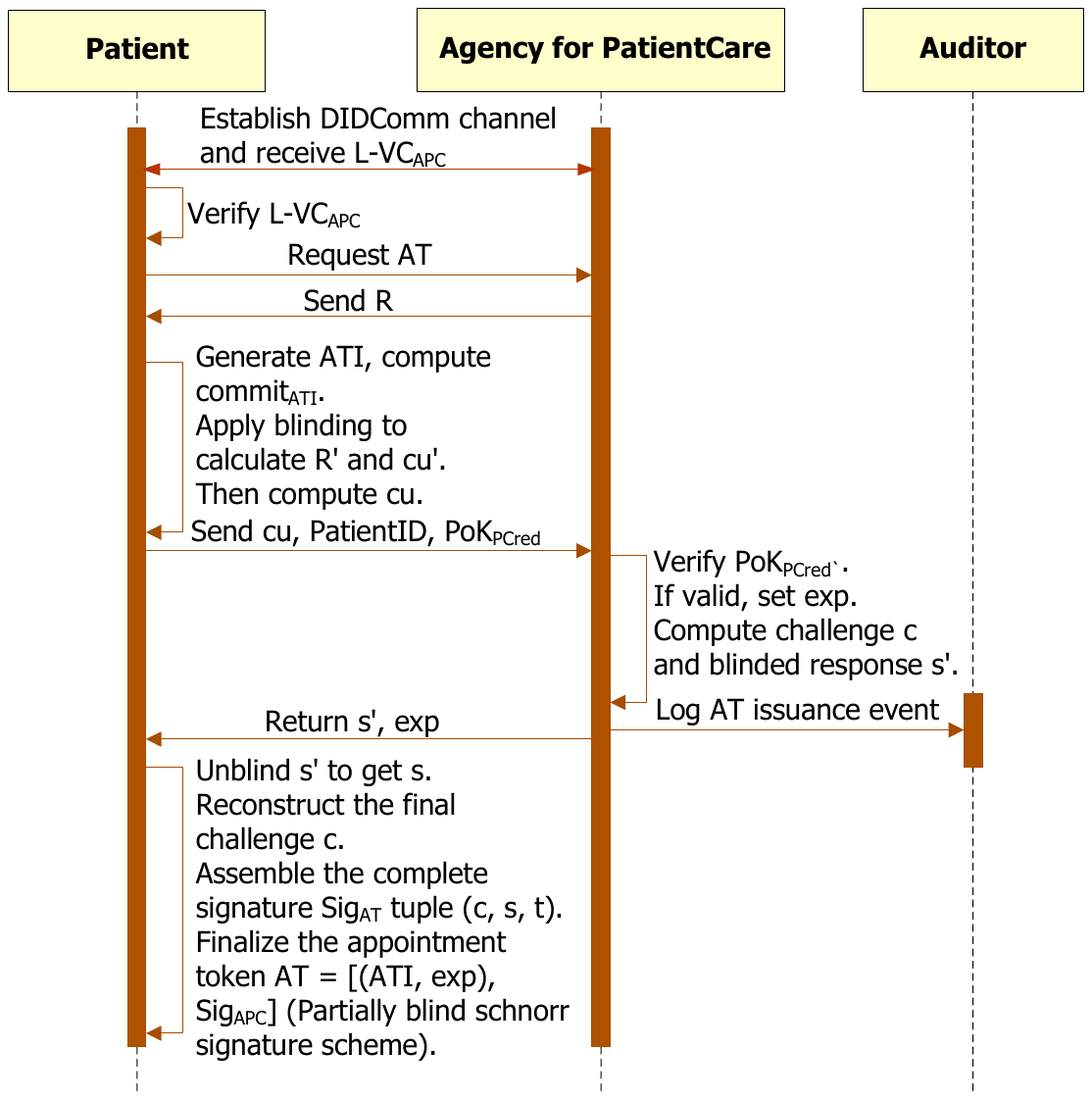}
    \caption{Sequence diagram of Appointment Token issuance}
    \label{fig:appointment_token_issuance}
\end{figure}

\subsubsection{Patient Access with a Healthcare Provider}
\label{subsubsec:patient_access_healthcare_provider}

Patient access within the HIDM framework occurs in two phases: interaction with the healthcare organization for appointment scheduling and in-person identity verification (Section \ref{patient_interaction_with_healthcare_organization}), followed by direct engagement with the healthcare professional for medical consultation (Section \ref{patient_interaction_with_healthcare_professional}).

\paragraph{Patient Interaction with Healthcare Organization}
\label{patient_interaction_with_healthcare_organization}

Patient interaction with a healthcare organization involves two stages: First, the patient remotely books an appointment using their single-use Appointment Token to secure a time slot (Section \ref{appointment_booking_pseudonym_appointment_booking}). Then, upon arriving for the consultation, they undergo secure, in-person identity verification (Section \ref{subpar:in_person_identity_verification}).

\paragraph*{Appointment Booking}
\label{appointment_booking_pseudonym_appointment_booking}

To schedule an appointment, the patient selects a time slot at a healthcare organization (e.g., a community health clinic) and sends an appointment scheduling request.

The format of the Appointment Schedule Request is:
$$
\begin{aligned}
\mathsf{AppointmentScheduleRequest} =\ &
\big[ \big(
    \mathsf{PAI},\ 
    \mathsf{AT},\ 
    \mathsf{ScheduleInfo}
\big),\\
&\quad \mathsf{Sig}_{\mathsf{P}_{\mathsf{patient}}} \big]
\end{aligned}
$$

where:
\begin{itemize}
    \item $ \mathsf{PAI} $: Pseudonym Access Information, consisting of 
          $ (\mathsf{P}_{\mathsf{patient}},\ \mathsf{rk}_{\text{patient}\rightarrow\text{HRR}},\ \mathsf{ct}_{\mathsf{PID}}) $
    \item $ \mathsf{AT} $: Appointment Token issued by the APC
    \item $ \mathsf{ScheduleInfo} $: requested appointment details (date, time slot, provider)
    \item $ \mathsf{Sig}_{\mathsf{P}_{\mathsf{patient}}} $: Blind-IBS signature generated using the pseudonym-specific private key $ \mathsf{sk}_{\mathsf{P}_{\mathsf{patient}}} $
\end{itemize}

Next, the healthcare organization presents its GHA-issued $ \mathsf{L\!-\!VC}_{\mathrm{HO}} $ for legitimacy verification through the established mutually authenticated DIDComm channel. Upon successful verification, the patient sends the signed $\mathsf{AppointmentScheduleRequest}$ to the healthcare organization.

Upon receipt, the healthcare organization module:
\begin{itemize}
    \item Verifies $ \mathsf{Sig}_{\mathsf{P}_{\mathsf{patient}}} $ using the AMM of its IDM framework.
    \item Verifies the APC’s signature $ \mathsf{Sig}_{\mathsf{APC}} $ on $ \mathsf{AT} $.
    \item Validates the appointment schedule details.
    \item Queries the ATI Uniqueness Validator (a smart contract on the Appointment Token Usage Ledger) to ensure the $ \mathsf{ATI} $ has not been used. If unused, the validator records it, and the request is accepted; otherwise, the request is rejected as a replay attempt.
    \item Generates a unique appointment confirmation code and returns it to the patient.
    \item Records the booking event in the Auditor Ledger using visit-scoped $ \mathsf{P}_{\mathsf{patient}} $ as $\mathsf{patientIdentifier}$, with relevant $\mathsf{eventDetails}$ such as $ \mathsf{AppointmentConfirmationCode} $ and a single-use, time-bound appointment token reference $ \mathsf{ATI} $.
\end{itemize}

\begin{figure}[h]
    \centering
    \includegraphics[width=8cm, trim={0cm 0cm 0cm 0cm}, clip]{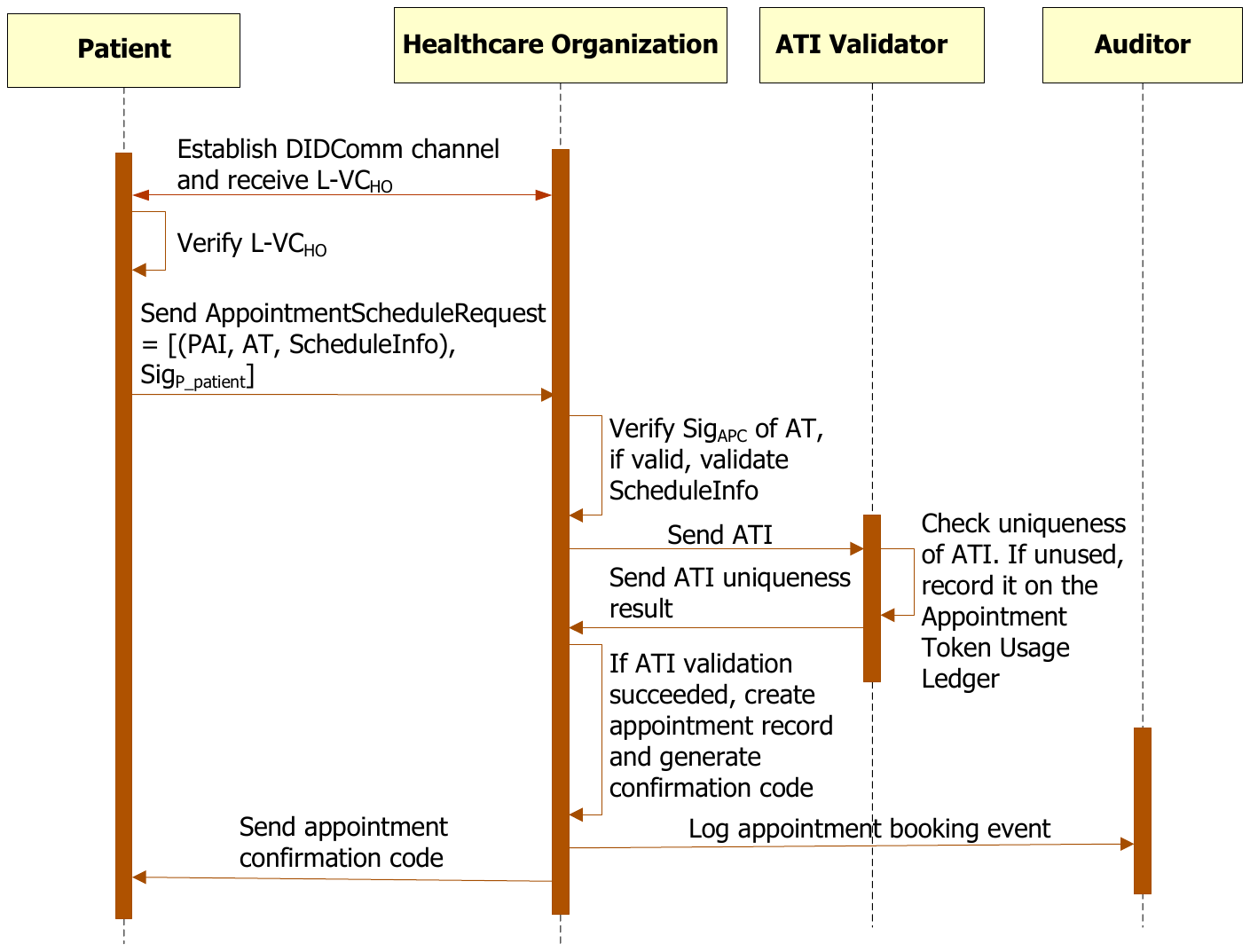}
    \caption{Sequence diagram of appointment booking using pseudonym and Appointment Token}
    \label{fig:appointment_bookings}
\end{figure}

Figure~\ref{fig:appointment_bookings} shows the sequence diagram of appointment booking using PAI and an Appointment Token, highlighting how the healthcare organization verifies the Appointment Token and the ATI.

\paragraph*{In-Person Identity Verification}
\label{subpar:in_person_identity_verification}

At the scheduled appointment time, the patient arrives at the healthcare organization for identity verification.  

The patient may optionally re-verify the Legitimacy Verification Credential $\mathsf{L\!-\!VC}_{\mathrm{HO}}$ to ensure the healthcare organization's DID has not been revoked before establishing a mutually authenticated DIDComm channel.

The patient then submits the following signed identity verification request:

$$
\begin{aligned}
\mathsf{IdentityVerificationRequest} = {} &
\big[
  (\mathsf{AppointmentConfirmationCode},\\
&\quad
   \mathsf{BioHash},\,
   \mathsf{PoK}_{\mathsf{PCred}},\,
   \mathsf{PT}),\,
  \mathsf{Sig}_{\mathsf{P_{patient}}}
\big]
\end{aligned}
$$

where:
\begin{itemize}
    \item $ \mathsf{AppointmentConfirmationCode} $: unique confirmation code generated during appointment booking.
    \item $ \mathsf{BioHash} $: probabilistic hash of the patient's biometric features (e.g., facial template) generated from a live sample at check-in. The $\mathsf{BioHash}$ is session-scoped, used transiently for local verification, and is not reused, stored, or logged to the Auditor Ledger or any backend database.
    \item $\mathsf{PoK}_{\mathsf{PCred}}$: Proof of knowledge of the Patient Credential $ \mathsf{PoK}_{\mathsf{PCred}} $
    \item $ \mathsf{PT} $: Pseudonym Token issued by the PTA
    \item $ \mathsf{Sig}_{\mathsf{P}_{\mathsf{patient}}} $: Blind-IBS signature generated using the pseudonym-specific private key $ \mathsf{sk}_{\mathsf{P}_{\mathsf{patient}}} $
\end{itemize}

Upon receipt, the healthcare organization:
\begin{itemize}
    \item Verifies $ \mathsf{Sig}_{\mathsf{P}_{\mathsf{patient}}} $ on the $\mathsf{IdentityVerificationRequest}$ using the AMM of its IDM framework.
    \item Validates $ \mathsf{AppointmentConfirmationCode} $.
    \item Captures a live image of the patient and compares its probabilistic facial hash with $ \mathsf{BioHash} $ in the request.
    \item Verifies $ \mathsf{PoK}_{\mathsf{PCred}} $ and the digital signature on $ \mathsf{PT} $.
    \item Records the identity verification event in the Auditor Ledger using visit-scoped $ \mathsf{P}_{\mathsf{patient}} $ as $ \mathsf{patientIdentifier} $, together with minimal event metadata such as the $ \mathsf{AppointmentConfirmationCode} $ and a verification outcome (e.g., biometric match successful). No persistent identifiers or biometric data are stored, preventing correlation across visits or healthcare organizations.
\end{itemize}

\begin{figure}[h]
    \centering
    \includegraphics[width=8cm, trim={0cm 0cm 0cm 0cm}, clip]{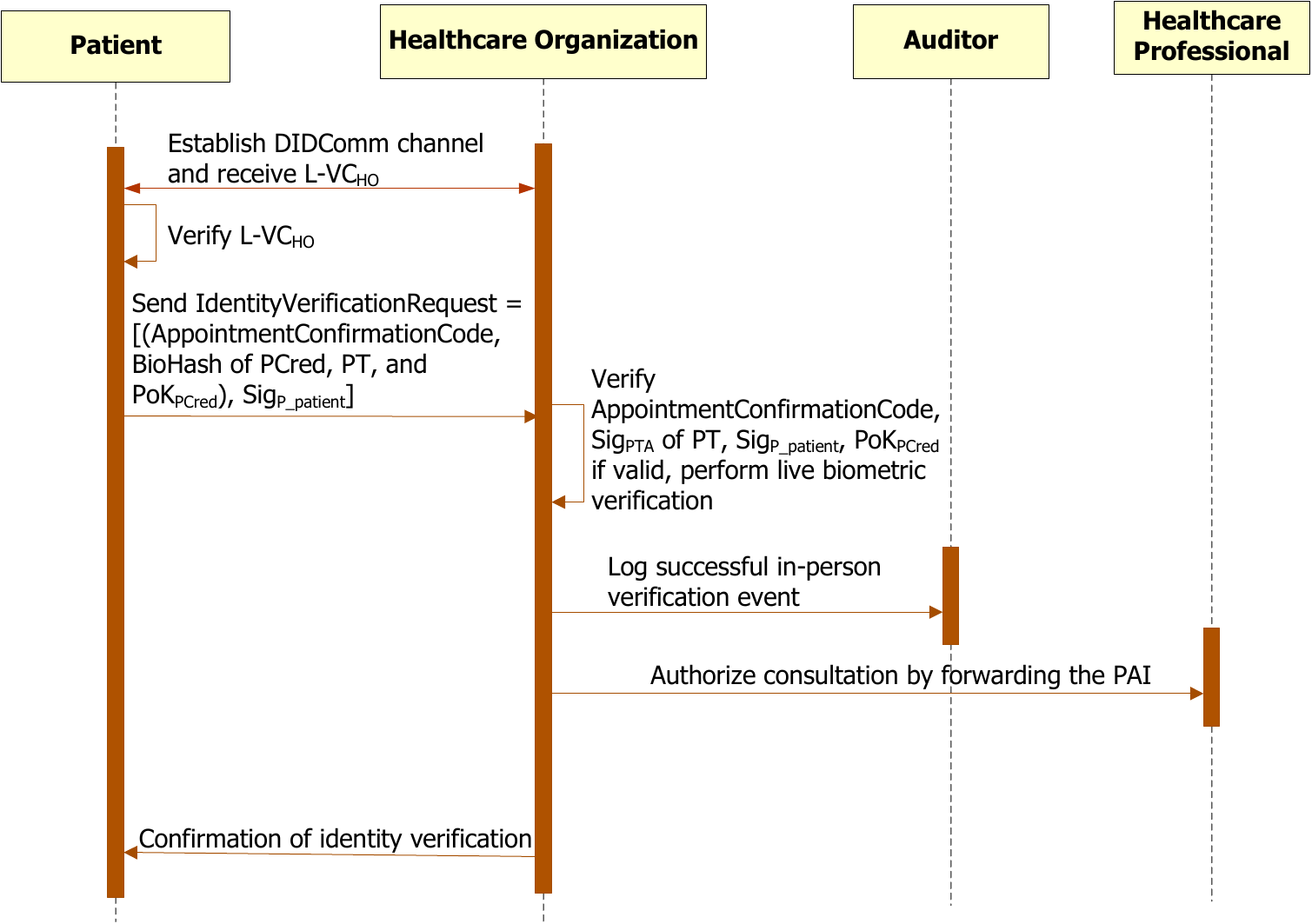}
    \caption{Sequence diagram of in-person identity verification at the healthcare organization}
    \label{fig:in_person_identity_verification}
\end{figure}

If the verification succeeds, the healthcare organization confirms that the individual matches the booked appointment and holds valid credentials: $ \mathsf{PCred} $, $ \mathsf{AT} $, and $ \mathsf{PT} $.

Next, the healthcare organization and the designated healthcare professional exchange DIDs to establish a new secure DIDComm channel for all subsequent in-person interactions. Specifically, the healthcare organization sends $ \mathsf{DID}_{\mathsf{HO}_i} $ to the healthcare professional, who responds with $ \mathsf{DID}_{\mathsf{HP}_i} $. Both parties then present their GHA-issued legitimacy verification credentials $ \mathsf{L\!-\!VC}_{\mathrm{HO}_i} $ for the healthcare organization and $ \mathsf{L\!-\!VC}_{\mathrm{HP}_i} $ for the healthcare professional for mutual trust verification. Finally, the healthcare organization transmits $ \mathsf{P}_{\mathsf{patient}} $, $ \mathsf{rk}_{\text{patient} \rightarrow \mathsf{HRR}} $, and $ \mathsf{ct}_{\text{PID}} $ to the healthcare professional over the authenticated channel and records a corresponding audit entry using only the visit-scoped $ \mathsf{P}_{\mathsf{patient}} $ and minimal event metadata, as defined above.

Figure~\ref{fig:in_person_identity_verification} shows the sequence diagram of in-person identity verification at the healthcare organization, highlighting biometric verification, pseudonym token validation, and proof of Patient Credential possession.

\paragraph{Patient Interaction with Healthcare Professional}
\label{patient_interaction_with_healthcare_professional}

In the consultation phase, the healthcare professional uses the Healthcare Professional Module to request the patient’s longitudinal health record from the HRR. To enable this, the healthcare organization forwards the PAI to the professional, who re-encrypts the patient’s pseudonym $ \mathsf{P}_{\mathsf{patient}} $ with the re-encryption key $ \mathsf{rk}_{\mathsf{patient} \rightarrow \mathsf{HRR}} $, producing the HRR-specific pseudonym $ \mathsf{P}_{\mathsf{HRR}} $ (Section~\ref{subpar:healthcare_proxy_reencryption}).

\begin{figure}[h]
    \centering
    \includegraphics[width=8cm, trim={0cm 0cm 0cm 0cm}, clip]{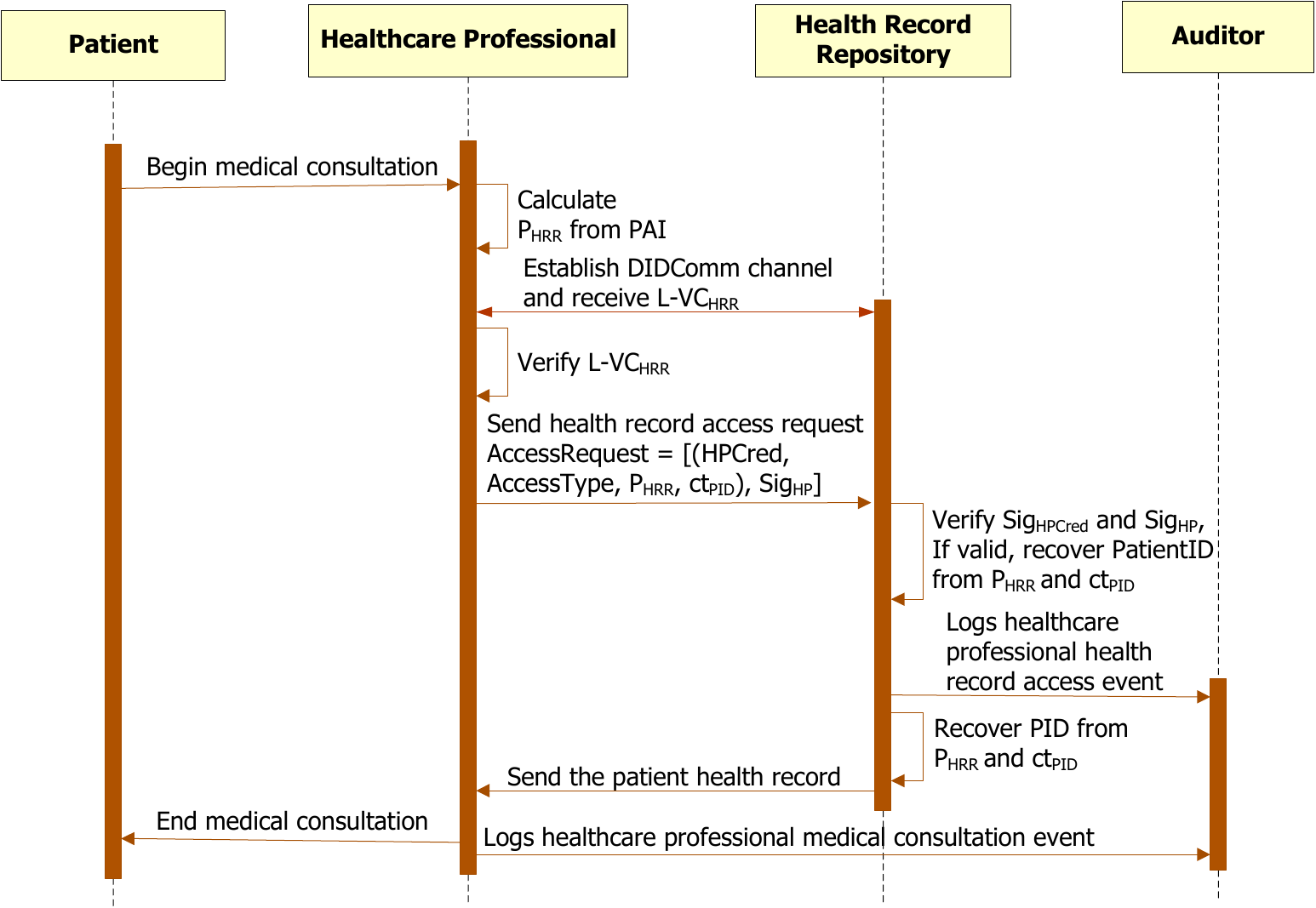}
    \caption{Sequence diagram of patient consultation with the healthcare professional}
    \label{fig:patient_consultation}
\end{figure}

The healthcare professional and the HRR then establish a mutually authenticated DIDComm channel. Through the DIDComm channel, they exchange and verify their GHA-issued legitimacy credentials $ \mathsf{L\!-\!VC}_{\mathrm{HP}_i} $ for the professional and $ \mathsf{L\!-\!VC}_{\mathrm{HRR}} $ for the HRR.  

The healthcare professional then submits the signed access request:
$$
\begin{aligned}
\mathsf{AccessRequest} =\ &
\big[ \big(
    \mathsf{HPCred},\ 
    \mathsf{AccessType},\ 
    \mathsf{P}_{\mathsf{HRR}},\ 
    \mathsf{ct}_{\mathsf{PID}}
\big),\\
\mathsf{Sig}_{\mathsf{HP}} \big]
\end{aligned}
$$

where:
\begin{itemize}
  \item $ \mathsf{HPCred} $: GHA-signed credential authorizing the healthcare professional’s access.
  \item $ \mathsf{AccessType} $: Intended operation on the patient's record (e.g., read or write).
  \item $ \mathsf{P}_{\mathsf{HRR}} $: HRR-specific pseudonym derived by the healthcare professional, enabling patient identification without revealing the patient’s real identity.
  \item $ \mathsf{ct}_{\mathsf{PID}} $: Encrypted healthcare-specific patient identifier.
  \item $ \mathsf{Sig}_{\mathsf{HP}} $: Digital signature using the healthcare professional’s private key $ \mathsf{sk}_{P_{\text{HP}}} $, generated via the Schnorr signature scheme.
\end{itemize}

Upon receiving the request, the HRR:

\begin{itemize}

    \item Verifies $ \mathsf{HPCred} $ using its AMM.
    \item Verifies $ \mathsf{Sig}_{\mathsf{HP}} $.
    \item Decrypts $ \mathsf{P}_{\mathsf{HRR}} $ with $ \mathsf{sk}_{\mathsf{HRR}} $ to recover $ \mathsf{PatientID}_{G_T} $.
    \item Derives the symmetric key $k = \mathsf{HKDF}\big(\mathsf{salt}, \mathsf{enc}(\mathsf{PatientID}_{G_T}), \mathsf{info}\big)$ and decrypts $ \mathsf{ct}_{\mathsf{PID}} $ to obtain the original $ \mathsf{PatientID} $.
    \item Retrieves the corresponding health record using $ \mathsf{PatientID} $.
    \item Records the health record access event in the Auditor Ledger using visit-scoped $ \mathsf{P}_{\mathsf{patient}} $ as $ \mathsf{patientIdentifier} $, together with minimal event metadata such as access type (read/write), healthcare professional identifier, and access timestamp. No persistent identifiers or biometric data are stored, preventing correlation across visits or healthcare organizations (Section~\ref{subsec:architecture-healthcare-idm}).

\end{itemize}

This enables the healthcare professional to review the patient's historical data and, if needed, append new entries such as prescriptions or consultation notes. Figure~\ref{fig:patient_consultation} shows the sequence diagram of a patient consultation with a healthcare professional. It presents how the healthcare professional submits a signed access request to retrieve the patient's health record from the HRR. Table~\ref{tab:scyther-verification} shows that the proposed HIDM framework is resistant to eavesdropping, replay, and impersonation attacks under the Dolev–Yao model.

\section{Security and Privacy Analysis}
\label{sec:security_privacy_analysis}

This section presents a comprehensive evaluation of the proposed \textit{HIDM} framework through two complementary analysis mechanisms that jointly validate its security and privacy properties.

The MSRA-based analysis~\cite{gurses2006contextualizing} evaluates how the framework's architecture and design choices address the diverse and often competing security and privacy goals of healthcare stakeholders. It systematically maps potential threats to the security mechanisms embedded in the HIDM framework, thereby demonstrating its alignment with established security objectives and healthcare regulatory requirements (Section \ref{subsec:msra_security_privacy}).

The formal security and privacy analysis~\cite{cremers2008scyther,meier2013tamarin} provides mathematical proofs for the core cryptographic artifacts employed within the framework. It establishes the unforgeability and replay resistance of issuer-signed artifacts under standard computational assumptions, ensuring the cryptographic soundness of the underlying protocols (Section \ref{subsec:formal_security_privacy_analysis}).

\subsection{Threat Model and Trust Assumptions}

A standard Dolev--Yao network adversary~\cite{dolev2003security} is considered, who can observe, intercept, replay, and modify messages exchanged over public communication channels, but cannot break the underlying cryptographic primitives. In addition, insider adversaries are considered who may control or compromise individual system components while continuing to follow the prescribed protocol execution (honest--but--curious behavior).

The proposed HIDM framework operates across multiple administrative entities with clearly separated roles. Its security relies on the explicit distribution of trust among these entities rather than on any single fully trusted party:

\begin{itemize}

\item \textbf{APC}: Trusted to perform correct identity verification and maintain the mapping between a \emph{PatientID} and real-world PII. The APC is not trusted with access-context information, such as appointment details or healthcare interactions.

\item \textbf{PTA}: Trusted to issue and validate pseudonym tokens correctly and to map PTIs to \emph{PatientIDs} only when authorized. The PTA does not possess PII attributes and cannot independently identify patients.

\item \textbf{HRR}: Treated as honest--but--curious. While it utilizes the PRE scheme under authorized workflows to recover a healthcare-specific $\mathsf{PatientID}$ for longitudinal record indexing, it never has access to the patient’s real-world PII.

\end{itemize}

It is assumed that the APC and PTA do not collude outside of explicitly authorized trace procedures. Collusion between these entities would enable direct resolution from pseudonyms to real-world identities and is therefore considered out of scope for the proposed threat model. Similarly, healthcare organizations are assumed not to collude simultaneously with both the APC and the PTA.
These trust boundaries serve as the foundational constraints for the MSRA-based evaluation in Section~\ref{subsec:msra_security_privacy}, ensuring that security and privacy properties, including conditional traceability, are assessed with respect to the explicit capabilities and limitations of each stakeholder.

\subsection{MSRA-Based Analysis}
\label{subsec:msra_security_privacy}

The analysis proceeds in three stages: 
First, Section~\ref{subsubsec:stakeholder_episode_defs} identifies the stakeholders and defines the functional episodes forming the basis of the evaluation.
Next, Section~\ref{subsubsec:security_privacy_goal_elicitation} 
maps the security and privacy requirements of each stakeholder to the mechanisms implemented in the framework.
Finally, Section~\ref{subsubsec:resolution_mechanism} describes the conflict detection and resolution mechanism that reconciles patient privacy with the need for accountability.

\subsubsection{Stakeholder and Episode Definitions} \label{subsubsec:stakeholder_episode_defs}
This subsection establishes the analytical context for the MSRA-based evaluation by identifying the stakeholders in the HIDM framework and outlining the functional episodes in which they interact.

\paragraph*{Stakeholders}
The HIDM comprises the following stakeholders, each with distinct roles, responsibilities, as well as security and privacy objectives:

\begin{itemize}
    \item \textbf{Patient:} Seeks to control their personal and health information through privacy and informed consent.
    \item \textbf{Healthcare Provider:} Requires authenticated and authorized access to patient health records in order to deliver healthcare services.
    \item \textbf{GHA:} Acts as the root of trust by issuing L-VCs to other components of the IDM framework.
    \item \textbf{APC:} Issues the \textit{Patient Credential} and pseudonym-specific private keys.
    \item \textbf{PTA:} Issues PTs that bind a pseudonym to a patient identifier without revealing the identifier to service providers.
    \item \textbf{Auditor:} Maintains tamper-evident logs to enable traceability and accountability across the ecosystem.
\end{itemize}

\paragraph*{Functional Episodes}
To contextualize the security and privacy requirements for each stakeholder, the following functional episodes are modeled. Each episode represents a discrete interaction within the HIDM framework. The analysis considers the following episodes:

\begin{enumerate}
  \item Issuance of a \(\mathsf{PCred}\)
  \item Issuance of a \(\mathsf{PT}\)
  \item Issuance of a \(sk_{P_{\mathsf{patient}}}\)
  \item Issuance of an \(\mathsf{AT}\)
  \item Appointment booking with a healthcare organization
  \item In-person identity verification at the healthcare organization
  \item Authorization by the healthcare organization for consultation with the healthcare professional
  \item Access by the healthcare professional to the patient's health record from the HRR
\end{enumerate}

\section{Security and Privacy Analysis}
\label{sec:security_privacy_analysis}

This section presents a comprehensive evaluation of the proposed \textit{HIDM} framework through two complementary analysis mechanisms that jointly validate its security and privacy properties.

The MSRA-based analysis~\cite{gurses2006contextualizing} evaluates how the framework's architecture and design choices address the diverse and often competing security and privacy goals of healthcare stakeholders. It systematically maps potential threats to the security mechanisms embedded in the HIDM framework, thereby demonstrating its alignment with established security objectives and healthcare regulatory requirements (Section \ref{subsec:msra_security_privacy}).

The formal security and privacy analysis~\cite{cremers2008scyther,meier2013tamarin} provides mathematical proofs for the core cryptographic artifacts employed within the framework. It establishes the unforgeability and replay resistance of issuer-signed artifacts under standard computational assumptions, ensuring the cryptographic soundness of the underlying protocols (Section \ref{subsec:formal_security_privacy_analysis}). 

\subsection{Threat Model and Trust Assumptions}

A standard Dolev--Yao network adversary~\cite{dolev2003security} is considered, who can observe, intercept, replay, and modify messages exchanged over public communication channels, but cannot break the underlying cryptographic primitives. In addition, insider adversaries are considered who may control or compromise individual system components while continuing to follow the prescribed protocol execution (honest-but-curious behavior).

The proposed HIDM framework operates across multiple administrative entities with clearly separated roles. Its security relies on the explicit distribution of trust among these entities rather than on any single fully trusted party:

\begin{itemize}

\item \textbf{APC}: Trusted to perform correct identity verification and maintain the mapping between a $\mathsf{PatientID}$ and real-world PII. The APC is not trusted with access-context information, such as appointment details or healthcare interactions.

\item \textbf{PTA}: Trusted to issue and validate pseudonym tokens correctly and to map PTIs to $\mathsf{PatientIDs}$ only when authorized. The PTA does not possess PII attributes and cannot independently identify patients.

\item \textbf{HRR}: Treated as honest-but-curious. While it utilizes the PRE scheme under authorized workflows to recover a healthcare-specific $\mathsf{PatientID}$ for longitudinal record indexing, it never has access to the patient’s real-world PII.

\end{itemize}

It is assumed that the APC and PTA do not collude outside of explicitly authorized trace procedures. Collusion between these entities would enable direct resolution from pseudonyms to real-world identities and is therefore considered out of scope for the proposed threat model. Similarly, healthcare organizations are assumed not to collude simultaneously with both the APC and the PTA. These trust boundaries serve as the foundational constraints for the MSRA-based evaluation in Section~\ref{subsec:msra_security_privacy}, ensuring that security and privacy properties, including conditional traceability, are assessed with respect to the explicit capabilities and limitations of each stakeholder.

\subsection{MSRA-Based Analysis}
\label{subsec:msra_security_privacy}

The analysis proceeds in three stages: First, Section~\ref{subsubsec:stakeholder_episode_defs} identifies the stakeholders and defines the functional episodes forming the basis of the evaluation. Next, Section~\ref{subsubsec:security_privacy_goal_elicitation} maps the security and privacy requirements of each stakeholder to the mechanisms implemented in the framework. Finally, Section~\ref{subsubsec:resolution_mechanism} describes the conflict detection and resolution mechanism that reconciles patient privacy with the need for accountability.

\subsubsection{Stakeholder and Episode Definitions} \label{subsubsec:stakeholder_episode_defs}
This subsection establishes the analytical context for the MSRA-based evaluation by identifying the stakeholders in the HIDM framework and outlining the functional episodes in which they interact.

\paragraph*{Stakeholders}
The HIDM comprises the following stakeholders, each with distinct roles, responsibilities, as well as security and privacy objectives:

\begin{itemize}
    \item \textbf{Patient:} Seeks to control their personal and health information through privacy and informed consent.
    \item \textbf{Healthcare Provider:} Requires authenticated and authorized access to patient health records in order to deliver healthcare services.
    \item \textbf{GHA:} Acts as the root of trust by issuing L-VCs to other components of the IDM framework.
    \item \textbf{APC:} Issues the \textit{Patient Credential} and pseudonym-specific private keys.
    \item \textbf{PTA:} Issues PTs that bind a pseudonym to a patient identifier without revealing the identifier to service providers.
    \item \textbf{Auditor:} Maintains tamper-evident logs to enable traceability and accountability across the ecosystem.
\end{itemize}

\paragraph*{Functional Episodes}
To contextualize the security and privacy requirements for each stakeholder, the following functional episodes are modeled. Each episode represents a discrete interaction within the HIDM framework. The analysis considers the following episodes:

\begin{enumerate}
  \item Issuance of a $\mathsf{PCred}$
  \item Issuance of a $\mathsf{PT}$
  \item Issuance of a $sk_{P_{\mathsf{patient}}}$
  \item Issuance of an $\mathsf{AT}$
  \item Appointment booking with a healthcare organization
  \item In-person identity verification at the healthcare organization
  \item Authorization by the healthcare organization for consultation with the healthcare professional
  \item Access by the healthcare professional to the patient's health record from the HRR
\end{enumerate}

\subsubsection{Security and Privacy Goal Elicitation} \label{subsubsec:security_privacy_goal_elicitation}

This subsection maps the stakeholders and episodes to specific security and privacy goals. For each stakeholder in each episode, the goals are expressed using the following evaluation criteria:

\begin{itemize}
    \item \textbf{Information Protected:} The type of data that must be safeguarded within the episode.
    \item \textbf{Counter-Stakeholder:} The entity against whom the security or privacy goal is defined (e.g., adversary or internal stakeholder with conflicting interests).
    \item \textbf{Context:} The specific process or scenario in which the requirement applies.
    \item \textbf{Rationale:} The justification for the requirement.
\end{itemize}

While the framework addresses various operational workflows, the detailed analysis focuses primarily on Episodes 5 and 6. These episodes represent the most privacy-sensitive interactions (i.e., direct patient-provider contact) and serve as the core
scenarios for evaluating the proposed framework. 
The remaining episodes are nevertheless subject to explicit security and privacy requirements. Tables~\ref{tab:msra-summary-a} and~\ref{tab:msra-summary-b} extend the MSRA-based analysis to Episodes E1–E4 and E5–E8, respectively, summarizing stakeholder-specific requirements, protected information, threat actors, and the mechanisms employed to satisfy these requirements across the entire HIDM workflow.

\paragraph{Episode 5: Appointment Booking with the Healthcare Organization}

\medskip
\noindent\textbf{Stakeholder: Patient}

\begin{enumerate}
  \item \textit{Security and Privacy Requirement:} Verify the legitimacy of the healthcare organization ($\mathsf{HO}_i$).\\
  \textbf{Information Protected:} $\mathsf{AppointmentScheduleRequest}$ contents, including $\mathsf{PAI}$, $\mathsf{AT}$, and $\mathsf{ScheduleInfo}$ (without revealing $\mathsf{PatientID}$).\\
  \textbf{Counter-Stakeholder:} Fraudulent $\mathsf{HO}_i$ (impersonation).\\
  \textbf{Context:} Pre-booking event.\\
  \textbf{Rationale:} Prevent the disclosure of sensitive artifacts to malicious entities.\\
  \textbf{Fulfilling Mechanism:} Verify $\mathsf{L\!-\!VC}_{\mathsf{HO}_i}$ and use a mutually authenticated DIDComm channel.
\end{enumerate}

\medskip
\noindent\textbf{Stakeholder: Healthcare Organization ($\mathsf{HO}_i$)}

\begin{enumerate}
  \item \textit{Security and Privacy Requirement:} Authenticate and validate the appointment request.\\
  \textbf{Information Protected:} $\mathsf{AppointmentScheduleRequest}$ (integrity and authenticity).\\
  \textbf{Counter-Stakeholder:} Patient without a valid pseudonym-specific private key.\\
  \textbf{Context:} Booking event.\\
  \textbf{Rationale:} Ensure the requester controls $P_{\mathsf{patient}}$ via $sk_{P_{\mathsf{patient}}}$.\\
  \textbf{Fulfilling Mechanism:} Verify $\mathsf{Sig}_{\mathsf{P_{patient}}}$ on $\mathsf{AppointmentScheduleRequest}$.

  \item \textit{Security and Privacy Requirement:} Verify the authenticity and validity of $\mathsf{AT}$.\\
  \textbf{Information Protected:} $\mathsf{AT}$ (integrity and issuer authenticity).\\
  \textbf{Counter-Stakeholder:} Forged or expired $\mathsf{AT}$.\\
  \textbf{Context:} Booking event.\\
  \textbf{Rationale:} Prevent fraudulent or stale bookings.\\
  \textbf{Fulfilling Mechanism:} Verify $\mathsf{Sig}_{\mathsf{APC}}$ on $\mathsf{AT}$ and check $\mathsf{exp}$.

  \item \textit{Security and Privacy Requirement:} Enforce the single-use property of $\mathsf{AT}$.\\
  \textbf{Information Protected:} $\mathsf{ATI}$ uniqueness.\\
  \textbf{Counter-Stakeholder:} A patient attempting to replay or reuse $\mathsf{AT}$.\\
  \textbf{Context:} Booking event.\\
  \textbf{Rationale:} Prevent multiple bookings from a single token.\\
  \textbf{Fulfilling Mechanism:} Check the $\mathsf{ATI}$ Usage Ledger prior to booking.
\end{enumerate}

\bigskip

\paragraph{Episode 6: In-Person Identity Verification at the Healthcare Organization}

\medskip
\noindent\textbf{Stakeholder: Patient}

\begin{enumerate}
  \item \textit{Security and Privacy Requirement:} Minimize disclosure from $\mathsf{PCred}$.\\
  \textbf{Information Protected:} Only $\mathsf{BioHash}$ and required proofs.\\
  \textbf{Counter-Stakeholder:} Over-collection by $\mathsf{HO}$.\\
  \textbf{Context:} Check-in event.\\
  \textbf{Rationale:} Prevent unnecessary attribute leakage and profiling.\\
  \textbf{Fulfilling Mechanism:} Selective disclosure via CL signatures with $\mathsf{PoK}_{\mathsf{PCred}}$.
\end{enumerate}

\medskip
\noindent\textbf{Stakeholder: Healthcare Organization}

\begin{enumerate}
  \item \textit{Security and Privacy Requirement:} Verify selective disclosure from $\mathsf{PCred}$.\\
  \textbf{Information Protected:} Validity of disclosed attributes (e.g., $\mathsf{BioHash}$).\\
  \textbf{Counter-Stakeholder:} Fraudulent patient presenting invalid $\mathsf{PCred}$.\\
  \textbf{Context:} Check-in event.\\
  \textbf{Rationale:} Confirm legitimate credential holder and attribute validity.\\
  \textbf{Fulfilling Mechanism:} Verify NIZK proof of $\mathsf{PCred}$ possession.

  \item \textit{Security and Privacy Requirement:} Validate the $\mathsf{PT}$.\\
  \textbf{Information Protected:} $\mathsf{PT}$ (authenticity and validity).\\
  \textbf{Counter-Stakeholder:} Forged or tampered $\mathsf{PT}$.\\
  \textbf{Context:} Check-in event.\\
  \textbf{Rationale:} Admit only patients with legitimate pseudonym tokens.\\
  \textbf{Fulfilling Mechanism:} Verify $\mathsf{Sig}_{\mathsf{PTA}}$ on $\mathsf{PT}$.

  \item \textit{Security and Privacy Requirement:} Ensure the authenticity of the identity-verification request.\\
  \textbf{Information Protected:} Integrity of the request.\\
  \textbf{Counter-Stakeholder:} Patient without a valid pseudonym-specific key.\\
  \textbf{Context:} Check-in event.\\
  \textbf{Rationale:} Guarantee control of $P_{\mathsf{patient}}$ via $sk_{P_{\mathsf{patient}}}$.\\
  \textbf{Fulfilling Mechanism:} Verify $\mathsf{Sig}_{P_{\mathsf{patient}}}$.
\end{enumerate}

\begin{table*}[!t]
\scriptsize
\setlength{\tabcolsep}{3pt}
\renewcommand{\arraystretch}{1.12}
\centering
\begin{tabular}{|p{1.1cm}|p{1.7cm}|p{3.5cm}|p{2.4cm}|p{2.6cm}|p{3.3cm}|}
\hline
\textbf{Episode} & \textbf{Stakeholder} & \textbf{Requirements} & \textbf{Information Protected} & \textbf{Threat Actor} & \textbf{Mechanism} \\
\hline

\multirow{2}{*}{\textbf{E1}} & Patient &
Secure channel for PCred issuance; verify APC legitimacy; ensure PCred is verifiable &
PII during issuance; PCred integrity &
Eavesdropper; Fraudulent APC; Forger/Tamperer &
DIDComm (mutual auth); Verify L-VC\textsubscript{APC}; CL-signature verification \\
\cline{2-6}
& APC &
Protect mapping PII$\leftrightarrow$PatientID; record issuance event (minimal metadata) &
Mapping data; Issuance metadata &
Unauthorized accessor; Tampering adversary &
Hardened DB; Auditor ledger logging \\
\hline

\multirow{2}{*}{\textbf{E2}} & Patient &
Secure channel; verify PTA legitimacy; PT independently verifiable &
PatientID; $\mathsf{PoK}_{\mathsf{PCred}}$; PBP; PT &
Eavesdropper; Fraudulent PTA; Forger/Tamperer &
DIDComm; Verify L-VC\textsubscript{PTA}; Schnorr-signature verification \\
\cline{2-6}
& PTA &
Issue PT only to valid PCred holders; verify PBP (pseudonym$\leftrightarrow$PatientID); protect mapping; record issuance &
Credential integrity; Mapping data; Issuance metadata &
Invalid requester; Unauthorized accessor; Tampering adversary &
CL-sig verify + $\mathsf{PoK}_{\mathsf{PCred}}$; PBP (NIZK) verify; Secure DB; Auditor ledger \\
\hline

\multirow{2}{*}{\textbf{E3}} & Patient &
Secure channel; verify APC legitimacy; maintain unlinkability of $sk_{P_{\mathsf{patient}}}$ from $P_{\mathsf{patient}}$ and $\mathsf{PatientID}$ &
$\mathsf{PatientID}$; Blinded pseudonym $P'_{\mathsf{patient}}$; $sk_{P_{\mathsf{patient}}}$ &
Eavesdropper; Fraudulent APC; Curious issuer &
DIDComm; Verify L-VC\textsubscript{APC}; Blind-IBS key extraction \\
\cline{2-6}
& APC &
Issue $sk_{P_{\mathsf{patient}}}$ only to valid $\mathsf{PCred}$ holders; record key issuance (no true pseudonym) &
Credential integrity; Issuance metadata (incl. blinded pseudonym) &
Invalid requester; Tampering adversary &
CL-sig verify + $\mathsf{PoK}_{\mathsf{PCred}}$; Auditor ledger \\
\hline

\multirow{2}{*}{\textbf{E4}} & Patient &
Secure channel; verify APC legitimacy; AT independently verifiable; ATI is patient-generated and hidden from APC &
ATI (hidden); exp; AT integrity &
Eavesdropper; Fraudulent APC; Forger/Tamperer &
DIDComm; Verify L-VC\textsubscript{APC}; Partially blind Schnorr (ATI blinded); CL-sig verify + $\mathsf{PoK}_{\mathsf{PCred}}$ \\
\cline{2-6}
& APC &
Issue AT only to valid PCred; validate blinded ATI commitment; enforce ATI uniqueness; record issuance &
Blinded-ATI binding; Single-use property; Issuance metadata &
Invalid requester; Inconsistent commit; Replay attacker; Tampering adversary &
CL-sig verify + $\mathsf{PoK}_{\mathsf{PCred}}$; Partially blind Schnorr protocol; ATI Usage Ledger; Auditor ledger \\
\hline

\end{tabular}
\caption{MSRA-based analysis of HIDM (Part I: Episodes E1–E4)}
\label{tab:msra-summary-a}
\end{table*}

\begin{table*}[!t]
\scriptsize
\setlength{\tabcolsep}{3pt}
\renewcommand{\arraystretch}{1.12}
\centering
\begin{tabular}{|p{1.1cm}|p{1.7cm}|p{3.5cm}|p{2.4cm}|p{2.6cm}|p{3.3cm}|}
\hline
\textbf{Episode} & \textbf{Stakeholder} & \textbf{Requirements} & \textbf{Information Protected} & \textbf{Threat Actor} & \textbf{Mechanism} \\
\hline

\multirow{2}{*}{\textbf{E5}} & Patient &
Securely send \textit{AppointmentScheduleRequest} (PAI, AT, ScheduleInfo) and receive confirmation; verify HO legitimacy &
Request content (PAI, AT, ScheduleInfo); Confirmation code &
Eavesdropper; Fraudulent HO &
DIDComm; Verify L-VC\textsubscript{HO$_i$} \\
\cline{2-6}
& HO$_i$ &
Verify Sig\textsubscript{$P_{\text{patient}}$}; verify AT (Sig\textsubscript{APC}, check exp); enforce single-use ATI; protect stored PAI; record booking &
Request/Token integrity; ATI; PAI (at rest); Booking metadata &
Fraudulent requester; Token forgery/misuse; Replay; Tampering adversary &
Schnorr-verify (Sig\textsubscript{$P_{\text{patient}}$}); Verify partially blind Schnorr on AT (Sig\textsubscript{APC}); ATI Usage Ledger; Secure DB; Auditor ledger \\
\hline

\multirow{2}{*}{\textbf{E6}} & Patient &
Securely send \textit{IdentityVerificationRequest}; minimal disclosure from PCred &
BioHash; PT; $\mathsf{PoK}_{\mathsf{PCred}}$; Confirmation code; Undisclosed attrs &
Eavesdropper; Over-collection by HO &
DIDComm; CL selective disclosure + PoK \\
\cline{2-6}
& HO &
Verify $\mathsf{PoK}_{\mathsf{PCred}}$; validate PT (Sig\textsubscript{PTA}); verify Sig\textsubscript{$P_{\text{patient}}$}; match to booking (live face vs BioHash; confirmation); record event (minimal metadata) &
Binding to valid PCred; Token authenticity; Request integrity; Match evidence; Verification metadata (code, timestamp) &
Impersonator; Token forger; Fraudulent requester; Tampering adversary &
NIZK verify; Schnorr-verify; Liveness + BioHash check; Auditor ledger \\
\hline

\multirow{2}{*}{\textbf{E7}} & HO &
Verify HP legitimacy; transmit PAI securely with least retention &
PAI ($P_{\mathsf{patient}}$, $rk_{\mathsf{patient}\rightarrow \mathsf{HRR}}$, $ct_{\mathsf{PID}}$) &
Eavesdropper; Fraudulent HP; Malicious insider &
DIDComm; Verify L-VC\textsubscript{HP$_i$}; Least-retention policy \\
\cline{2-6}
& HP &
Verify HO legitimacy; log authorization without linkability &
PAI received from HO; Authorization metadata (event ID, timestamp) &
Fraudulent HO; Tampering adversary &
Verify L-VC\textsubscript{HO$_i$}; Auditor ledger \\
\hline

\multirow{2}{*}{\textbf{E8}} & HP &
Securely send \textit{AccessRequest} (HPCred, AccessType, $P_{\mathsf{HRR}}$, $ct_{\mathsf{PID}}$); verify HRR legitimacy &
AccessRequest contents; Retrieved records &
Eavesdropper; Fraudulent HRR &
DIDComm; Verify L-VC\textsubscript{HRR} \\
\cline{2-6}
& HRR &
Authenticate/authorize HP and enforce scope; validate patient reference and decrypt identifiers; record access without linkability &
Patient health records; Correct resolution of $P_{\mathsf{HRR}}$, $ct_{\mathsf{PID}}$; Access metadata (request ID, timestamp) &
Fraudulent or over-privileged HP; Tampering adversary &
Verify L-VC\textsubscript{HP$_i$} and $\mathsf{Sig}_{\mathsf{HP}}$; Check HPCred scope vs AccessType; Decrypt $P_{\mathsf{HRR}}$ and $ct_{\mathsf{PID}}$; Auditor ledger \\
\hline

\end{tabular}
\caption{MSRA-based analysis of HIDM (Part II: Episodes E5–E8)}
\label{tab:msra-summary-b}
\end{table*}

\subsubsection{Resolution Mechanism: Conditional Traceability}\label{subsubsec:resolution_mechanism}

The proposed HIDM framework incorporates a conditional traceability mechanism that safeguards patient privacy during routine operations while enabling identity reconstruction under exceptional, legally sanctioned conditions. This balance is achieved through a separation-of-duties approach, in which control over identity linkage data is intentionally divided between independent authorities. The following discussion explains how this approach structures the management of linkage data and governs the process for invoking traceability, and then analyzes the resulting security and privacy properties.

\paragraph*{Separation-of-Duties Approach and Traceability Workflow}

Within the healthcare ecosystem, identity-related mappings are partitioned as follows:
\begin{itemize}
    \item \textbf{APC}: Maintains $\{\mathsf{PII} \Leftrightarrow \mathsf{PatientID}\}$, but has no knowledge of pseudonyms $\mathsf{P_{patient}}$.
    \item \textbf{PTA}: Maintains $\{\mathsf{PatientID} \Leftrightarrow \mathsf{P_{patient}}\}$, but has no access to patient $\mathsf{PII}$.
    \item \textbf{Healthcare Organizations and Professionals}: Operate exclusively with $\mathsf{P_{patient}}$ for appointments, treatment, and record access; they never handle $\mathsf{PatientID}$ or $\mathsf{PII}$.
\end{itemize}

Traceability is triggered only upon receipt of a valid legal mandate (e.g., a court order or regulatory directive). A designated authority compels both the APC and PTA to disclose their respective mappings. All disclosures are logged by the Auditor to ensure transparency and prevent misuse. The identity reconstruction process is as follows:

\begin{enumerate}
    \item PTA reveals $\mathsf{P_{patient}} \rightarrow \mathsf{PatientID}$.
    \item APC reveals $\mathsf{PatientID} \rightarrow \mathsf{PII}$.
    \item Combining both reconstructs $\{\mathsf{PII} \Leftrightarrow \mathsf{PatientID} \Leftrightarrow \mathsf{P_{patient}}\}$, linking the pseudonymized activity to the individual.
\end{enumerate}

\subsection{Formal Security and Privacy Analysis}
\label{subsec:formal_security_privacy_analysis}

Having established the architectural soundness of the HIDM framework through the MSRA-based evaluation, this subsection presents its formal validation under well-defined adversarial models. The objective is to verify, with mathematical rigor, that the cryptographic constructions and protocol workflows preserve both security (Section~\ref{subsubsec:formal_security_analysis}) and privacy (Section~\ref{subsubsec:formal_privacy_analysis}) properties even in the presence of an active network adversary.

\subsubsection{Formal Security Analysis}
\label{subsubsec:formal_security_analysis}
The formal security analysis of the HIDM framework is performed in two stages: computational proof-based reasoning and automated symbolic verification. The computational proof-based reasoning establishes the intrinsic cryptographic soundness of the HIDM framework's core artifacts (e.g., credentials, tokens, signed payloads) by proving properties such as unforgeability under standard hardness assumptions and replay resistance under bounded adversarial models (Section ~\ref{paragraph:comp-proof-reasoning}). The automated symbolic verification validates that the protocol workflows constructed from these artifacts maintain end-to-end security properties (e.g., message confidentiality, synchronization, and integrity) in an adversarial communication environment (Section ~\ref{paragraphc:auto-symbolic-verification}).

\paragraph{Computational proof-based reasoning} \label{paragraph:comp-proof-reasoning}

Theorem~\ref{thm:generic-unforgeability} establishes the cryptographic soundness of issuer-signed artifacts under standard computational assumptions. In contrast, Theorem~\ref{thm:replay-resistance-AT} analyzes protocol-level resistance to active network adversaries in the Dolev-Yao model.

\begin{theorem}[Generic unforgeability of issuer-signed artifacts]
\label{thm:generic-unforgeability}

Let $\Sigma$ be a digital signature scheme that is Existentially Unforgeable under Chosen Message Attack (EUF-CMA) secure. Consider any protocol that issues artifacts of the form:
$$
\mathsf{A} = [M,\sigma]\qquad\text{with}\qquad \sigma \leftarrow \mathrm{Sig}_{\mathsf{issuer}}(M)
$$
Then artifacts produced under $\Sigma$ are existentially unforgeable: no adversary can output $[M^{*},\sigma^{*}]$ with $M^{*}$ not previously submitted to the signing oracle and $\mathsf{Verify}(Y_{\mathsf{issuer}},M^{*},\sigma^{*})=1$, except with negligible probability.
\end{theorem}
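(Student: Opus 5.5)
The plan is a direct black-box reduction to the EUF-CMA security of $\Sigma$. Suppose some probabilistic polynomial-time adversary $\mathcal{A}$ forges artifacts, meaning it outputs $[M^{*},\sigma^{*}]$ with $\mathsf{Verify}(Y_{\mathsf{issuer}},M^{*},\sigma^{*})=1$ and $M^{*}$ never signed by the issuer, with non-negligible probability $\varepsilon$. From $\mathcal{A}$ I will build a forger $\mathcal{B}$ against $\Sigma$ with essentially the same advantage, which contradicts the hypothesis.

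$\mathcal{B}$ receives the challenge public key $Y$ from the EUF-CMA challenger and sets $Y_{\mathsf{issuer}}:=Y$. This could be the APC key for $\mathsf{PCred}$ or $\mathsf{AT}$, the PTA key for $\mathsf{PT}$, or the GHA key for $\mathsf{L\!-\!VC}$. $\mathcal{B}$ then runs $\mathcal{A}$ and simulates the rest of the framework itself: all other parties' keys, the ledgers, the DIDComm channels, and the zero-knowledge proofs. Whenever the protocol requires the issuer to sign a message $M$ (for example $(\mathsf{PTI},\mathsf{P}_{\mathsf{patient}})$ after checking $\mathsf{PoK}_{\mathsf{PCred}}$ and the PBP), $\mathcal{B}$ does the issuer-side checks itself, queries its own signing oracle on $M$, and hands the returned $\sigma$ back to $\mathcal{A}$. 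It records $M$ in a list $Q$.

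Because every issuer signature in the simulation comes from the oracle, the artifact-level list of signed messages is exactly $Q$. So the view of $\mathcal{A}$ is distributed identically to a real execution. When $\mathcal{A}$ outputs $[M^{*},\sigma^{*}]$ with $M^{*}\notin Q$ and a valid verification, $\mathcal{B}$ outputs $(M^{*},\sigma^{*})$ as its own forgery. Hence $\mathrm{Adv}^{\mathrm{EUF\text{-}CMA}}_{\Sigma}(\mathcal{B})\ge \varepsilon$, and $\mathcal{B}$ runs in time roughly that of $\mathcal{A}$ plus the simulation overhead. If the framework has several issuers, $\mathcal{B}$ guesses which issuer key $\mathcal{A}$ will forge under. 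This costs at most a factor equal to the number of issuers, which is a constant, so the advantage stays non-negligible.

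The main obstacle is making sure that "signed by the issuer" is the same thing as "queried to the oracle" for each concrete instantiation. Two cases need care.

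\textbf{CL signatures.} Issuance is on committed or partially hidden attributes, and the artifacts are later shown only through $\mathsf{PoK}_{\mathsf{PCred}}$. I would state the theorem for the underlying message vector and invoke the EUF-CMA security of the CL scheme in its standard formulation, where blind issuance is modelled with an extractable commitment. The simulator would use the knowledge extractor to obtain the full message before querying the oracle.

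\textbf{Partially blind Schnorr signatures.} The $\mathsf{AT}$ is issued this way, so the issuer never sees $M=(\mathsf{ATI},\mathsf{exp})$, and plain EUF-CMA does not directly apply. Here I would use the scheme's one-more unforgeability, as in Abe--Okamoto: with $\ell$ issuance sessions per $\mathsf{exp}$, the adversary cannot produce $\ell+1$ valid signatures. A forgery on a "new" message in the theorem's sense would then be read as producing more signatures than sessions. Ordinary Schnorr artifacts, $\mathsf{PT}$ and $\mathsf{Sig}_{\mathsf{HP}}$, reduce cleanly in the random oracle model via the forking lemma to discrete log. Their EUF-CMA security is simply the hypothesis, so no further work is needed for them.
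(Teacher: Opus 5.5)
Your core reduction is exactly the paper's proof: set $Y_{\mathsf{issuer}}$ to the challenge key, relay every signing request to the EUF-CMA oracle while recording $Q$, and output $\mathcal{A}$'s forgery. It is correct for the theorem as stated, which is essentially EUF-CMA restated for artifacts $[M,\sigma]$.

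The genuine difference is how you treat blinded issuance. The paper's proof asserts that $\mathcal{B}$ forwards $M$ to its oracle ``either through a standard or blinded signing interface''. That cannot work: in a blind session the simulated issuer sees only a blinded challenge, never $M$, so ``$M^{*}$ not previously submitted'' is not even well defined. You are right to flag this and to switch to one-more unforgeability. This gives an honest statement, at the price of a hypothesis that differs from the theorem's.

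That hypothesis needs qualification, though. Abe--Okamoto prove one-more unforgeability only for polylogarithmically many issuances. With more than $\log_2 q$ concurrent sessions ($q$ the group order) sharing one $\mathsf{exp}$, the ROS attack yields $\ell+1$ valid signatures from $\ell$ sessions. So you would need a bounded or sequential-issuance assumption. Also, the paper's additive-challenge variant $c = cu'_{\mathsf{patient}} + cs_{\mathsf{APC}}$ is not literally the Abe--Okamoto scheme, so their theorem does not apply verbatim.

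Two smaller points. $\mathsf{PCred}$ issuance in the paper is not blind, since the APC generates every attribute itself, so no commitment extractor is needed there. The issuer-guessing step is harmless but lies outside the single-key statement; where it is needed (e.g.\ many professionals signing under Schnorr), the loss is the polynomial number of keys rather than a constant, which is still fine.
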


\begin{proof}[Proof (reduction)]
Assume, for contradiction, a PPT adversary $\mathcal{A}$ forges a valid artifact with advantage $\varepsilon$. We construct a PPT algorithm $\mathcal{B}$ that breaks EUF-CMA security of $\Sigma$ with the same advantage.

\emph{Setup.}
The EUF-CMA challenger $\mathcal{C}_{\Sigma}$ defines the security experiment and provides signing-oracle access. It generates a key pair $(x_{\mathsf{issuer}}, Y_{\mathsf{issuer}})$, gives the public key $Y_{\mathsf{issuer}}$ to $\mathcal{B}$, and keeps the secret key $x_{\mathsf{issuer}}$ private. Algorithm $\mathcal{B}$ then forwards $Y_{\mathsf{issuer}}$ to the adversary $\mathcal{A}$.

\emph{Query simulation.}
Whenever $\mathcal{A}$ requests a signature on a message $M$ (either through a standard or blinded signing interface), $\mathcal{B}$ forwards $M$ to its signing oracle at $\mathcal{C}_{\Sigma}$. The oracle responds with $\sigma = \mathrm{Sig}_{\mathsf{issuer}}(M)$, which $\mathcal{B}$ then relays to $\mathcal{A}$.

\emph{Forgery output.}
Eventually, $\mathcal{A}$ produces a forged signature $\sigma^{*} = \mathrm{Sig}_{\mathsf{issuer}}(M^{*})$ on a message $M^{*}$ that was never submitted to the signing oracle.
 
\emph{Conclusion.}
Since $\Sigma$ is assumed to be EUF-CMA secure, the existence of an adversary $\mathcal{A}$ that forges a valid artifact with non-negligible probability would contradict this assumption. If such an adversary $\mathcal{A}$ existed with success probability $\varepsilon$, then algorithm $\mathcal{B}$ could break the EUF-CMA security of $\Sigma$ with the same advantage. Therefore, all issuer-signed core artifacts of the HIDM framework are unforgeable under standard computational assumptions.
\end{proof}

\begin{corollary}[Unforgeability of domain-specific artifacts] \label{cor:instantiations}
Instantiating Theorem~\ref{thm:generic-unforgeability} yields:
\begin{enumerate}
  \item \emph{Legitimacy credentials for healthcare organizations.} If $\Sigma$ is RSA as used for $\mathsf{L\!-\!VC}_{\mathrm{HO}}$, then $\mathsf{L\!-\!VC}_{\mathrm{HO}}$ is EUF-CMA unforgeable.
  \item \emph{Patient credentials.} If $\Sigma$ is the CL signature used for $\mathsf{PCred}$, then $\mathsf{PCred}$ is EUF-CMA unforgeable.
  \item \emph{Appointment tokens.} If $\Sigma$ is the partially blind Schnorr signature used for $\mathsf{AT}$, then $\mathsf{AT}$ is EUF-CMA unforgeable.
  \item \emph{Pseudonym tokens.} If $\Sigma$ is the Schnorr signature used for $\mathsf{PT}$, then $\mathsf{PT}$ is EUF-CMA unforgeable.
  \item \emph{Patient-signed payloads.} If $\Sigma$ is the Blind-IBS scheme used for patient payloads, then such payloads are EUF-CMA unforgeable.
\end{enumerate}
\end{corollary}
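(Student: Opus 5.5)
The plan is to obtain each item of Corollary~\ref{cor:instantiations} by applying Theorem~\ref{thm:generic-unforgeability} with $\Sigma$ set to the scheme that issues the artifact. For each item, two things must be checked. First, the artifact has the form $[M,\sigma]$ with $\sigma\leftarrow\mathrm{Sig}_{\mathsf{issuer}}(M)$, where $M$ is exactly the message the issuer signs:
\begin{itemize}
\item for $\mathsf{L\!-\!VC}_{\mathrm{HO}}$, the credential body;
\item for $\mathsf{PCred}$, the tuple $(\mathsf{CredentialID},\mathsf{DID}_{\mathsf{Patient\_L}},\mathsf{PatientID},\mathsf{IssueDate},\mathsf{BioHash},\mathsf{DID}_{\mathsf{APC}})$;
\item for $\mathsf{PT}$, the pair $(\mathsf{PTI},\mathsf{P}_{\mathsf{patient}})$;
\item for $\mathsf{AT}$, the pair $(\mathsf{ATI},\mathsf{exp})$;
\item for patient payloads, the request body signed under the identity $\mathsf{P}_{\mathsf{patient}}$.
\end{itemize}
Second, the scheme is EUF-CMA secure under a standard assumption. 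Once both checks pass, the conclusion of the theorem transfers verbatim.

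For the non-interactive schemes I would cite the known results:
\begin{itemize}
\item RSA (in its hash-and-sign / RSA-FDH or PSS form, as used for X.509-style credentials) is EUF-CMA under the RSA assumption in the random oracle model.
\item Schnorr is EUF-CMA under discrete log in the random oracle model, via the forking lemma.
\item The CL signature~\cite{camenisch2004signature} is EUF-CMA under the LRSW assumption. Its pairing-based instantiation fits a block of messages, so $M$ is the attribute vector.
\end{itemize}
One point needs care for $\mathsf{PCred}$: verifiers only see a $\mathsf{PoK}_{\mathsf{PCred}}$. So I would state the item for the credential itself, and note that soundness and extractability of the proof of knowledge reduce a convincing false proof to a forgery.

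The main obstacle is the two items with \emph{blind} issuance: $\mathsf{AT}$ (partially blind Schnorr) and patient payloads (Blind IBS). Here the issuer never sees $M$, so "$M^*$ not previously submitted to the signing oracle" is not meaningful. The correct notion is one-more unforgeability: after $\ell$ completed issuance sessions (for a given common info $\mathsf{exp}$), no adversary outputs $\ell+1$ distinct valid pairs. I would interpret the corresponding corollary items this way, and invoke Abe--Okamoto~\cite{abe2000provably} for the partially blind Schnorr scheme, which is secure in the random oracle model under discrete log for polylogarithmically many concurrent sessions. I would also flag that the ROS attack caveat restricts concurrency.

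For Blind IBS, the relevant object is the extracted key $\mathsf{sk}_{P_{\mathsf{patient}}}$. I would proceed in two steps. The underlying IBS of~\cite{choon2002identity} is existentially unforgeable against adaptive identity and message queries under CDH in the random oracle model. Blindness of key extraction only hides which identity was queried, so a forgery on a new message under any identity whose key was not extracted still breaks the IBS. Signatures produced by the honest holder count as oracle queries in the reduction of Theorem~\ref{thm:generic-unforgeability}.
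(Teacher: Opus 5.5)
\paragraph{Overall.} Your outline is the paper's: the corollary is obtained by nothing more than plugging each scheme into Theorem~\ref{thm:generic-unforgeability}. That theorem's proof even claims that blinded signing queries are handled by forwarding $M$ to the EUF-CMA oracle. Your RSA, Schnorr and CL items are fine. Your observation that blind issuance calls for one-more unforgeability is correct and sharper than the paper.

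\paragraph{Gap in item~5 (Blind IBS).} This is where you abandon that observation, and the argument does not go through.

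\emph{Winning condition.} ``A forgery under an identity whose key was not extracted'' is not a usable condition. By blindness, neither the challenger nor your reduction learns which pseudonyms were extracted. The claim must therefore take the one-more form: after $\ell$ extraction sessions, no valid payload signatures under $\ell+1$ distinct pseudonyms.

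\emph{Simulating the APC.} This is the decisive problem: a reduction to the CDH-based unforgeability of the Choon--Cheon IBS cannot simulate the issuer. There $\mathsf{sk}_{\mathsf{ID}}=s\cdot H_1(\mathsf{ID})$. A blind extraction query is a group element $Q'$ that reveals nothing about the identity, and it must be answered with $s\cdot Q'$. The IBS extraction oracle, however, only accepts identities in the clear. With the CDH instance embedded in $P_{\mathsf{pub}}=sP$, answering $s\cdot Q'$ on adversarial inputs is exactly a CDH oracle. This is needed even for forgeries under honestly held keys, because patients are themselves potential adversaries who run their own extraction sessions.

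\emph{What is actually needed.} Blind extraction of this form is a blind BLS signature on the pseudonym. Its one-more unforgeability is known in the random-oracle model under the chosen-target (one-more) CDH assumption (Boldyreva, PKC~2003), not under plain CDH. Item~5 has to be restated and proved on that basis.

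\paragraph{Secondary problem in item~3.} Abe--Okamoto's theorem covers their own construction, in which the common information enters as a second generator $z=F(\mathit{info})$ inside a witness-indistinguishable OR structure. The protocol specified for $\mathsf{AT}$ is different: the APC merely adds the publicly computable $cs_{\mathsf{APC}}=H(Y_{\mathsf{APC}},\mathsf{exp})$ to the patient's unconstrained blinded challenge.

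This admits an attack by a patient who can anticipate $\mathsf{exp}$ (e.g., a policy-fixed validity window). The patient sends $cu_{\mathsf{patient}}+H(Y_{\mathsf{APC}},\mathsf{exp}_2)-H(Y_{\mathsf{APC}},\mathsf{exp})$ instead of $cu_{\mathsf{patient}}$. After the usual unblinding, the resulting $s$ satisfies $g^{s}=R'\,Y_{\mathsf{APC}}^{\,c}$ with $c=H(R',\mathsf{commit}_{\mathsf{ATI}})+H(Y_{\mathsf{APC}},\mathsf{exp}_2)$. The patient thus holds a valid token on $(\mathsf{ATI},\mathsf{exp}_2)$, although no session was ever run with $\mathsf{exp}_2$. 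This breaks precisely the one-more notion you adopt. Your citation therefore discharges item~3 only if $\mathsf{AT}$ is instantiated with the genuine Abe--Okamoto scheme, with the bounded concurrency your ROS caveat already requires.
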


\begin{theorem}[Replay resistance of appointment tokens]
\label{thm:replay-resistance-AT}

In the Dolev-Yao model, the appointment-booking protocol is replay-resistant against any PPT adversary $\mathcal{A}$, provided the healthcare organization (HO) enforces:
(i) one-time redemption via an \emph{ATI uniqueness} check, and
(ii) token \emph{expiration} verification.
\end{theorem}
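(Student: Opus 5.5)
We argue by contradiction. Suppose a PPT adversary $\mathcal{A}$, with Dolev--Yao control of the network, gets the HO to accept an $\mathsf{AppointmentScheduleRequest}$ that is a replay. We define a replay as acceptance of an Appointment Token $\mathsf{AT}=[(\mathsf{ATI},\mathsf{exp}),\mathsf{Sig}_{\mathsf{APC}}]$ under either of two conditions. Either the token was already redeemed, or it is presented after its expiry $\mathsf{exp}$. We then split the adversary's winning event into three exhaustive cases according to how the accepted token relates to the tokens honestly issued by the APC. We bound the probability of each case separately and finish with a union bound.

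\textbf{Case 1: forged token.} The pair $(\mathsf{ATI}^*,\mathsf{exp}^*)$ in the accepted token was never jointly signed in an issuance session. Since the HO checks $\mathsf{Verify}(Y_{\mathsf{APC}},(\mathsf{ATI}^*,\mathsf{exp}^*),\mathsf{Sig}^*)=1$, this is a valid forgery on a fresh message. By Corollary~\ref{cor:instantiations}(3), which is the instantiation of Theorem~\ref{thm:generic-unforgeability} with the partially blind Schnorr scheme, this happens with negligible probability. We build the reduction $\mathcal{B}$ exactly as in the proof of Theorem~\ref{thm:generic-unforgeability}: it simulates the APC's issuance interactions through its signing oracle and outputs $\mathcal{A}$'s token as its forgery.

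This case is the main obstacle. Because issuance is blind, "message not submitted to the oracle" is not directly observable: the APC never sees $\mathsf{ATI}$. The correct notion is one-more unforgeability. After $\ell$ completed issuance sessions, $\mathcal{A}$ cannot output $\ell+1$ valid tokens with distinct $\mathsf{ATI}$ values, and it cannot output a token bearing an $\mathsf{exp}$ that the APC never agreed to. I would therefore invoke the Abe--Okamoto security of partially blind Schnorr signatures in the random-oracle model under discrete log. This gives the bound with the common information $\mathsf{exp}$ fixed. I would read Corollary~\ref{cor:instantiations}(3) in this one-more sense and state that reading explicitly, noting that it holds only for polylogarithmically many concurrent sessions.

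\textbf{Case 2: honest token reused.} The token is honestly issued, possibly intercepted from the channel or re-sent by a dishonest holder, and its $\mathsf{ATI}$ was already redeemed. Condition (i) makes this case impossible. The ATI Uniqueness Validator atomically checks and records each $\mathsf{ATI}$ on the append-only Appointment Token Usage Ledger, so a second acceptance would contradict the ledger's immutability and atomic check-and-insert. We take immutability as a trust assumption of the permissioned ledger. A related sub-case is the collision in which two distinct honest tokens share an $\mathsf{ATI}$, which would cause a spurious rejection rather than a replay. Its probability is at most $q^2/2^{122}$ for $q$ tokens, since UUIDv4 provides 122 random bits. It is negligible in any case.

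\textbf{Case 3: stale token.} The token is honest and unredeemed but presented after $\mathsf{exp}$. Condition (ii) rejects it, because $\mathsf{exp}$ is bound into the signature, so altering it falls back into Case 1.

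Summing the three cases bounds $\mathcal{A}$'s success by $\mathsf{Adv}^{\text{omuf}}_{\text{PBS}}+q^2/2^{122}$, which is negligible. This establishes replay resistance.
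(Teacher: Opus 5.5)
Your proof has the same structure as the paper's sketch: altered tokens are excluded by unforgeability, verbatim replays by the ATI uniqueness check, and late use by the expiration check, with $\mathit{exp}$ protected by $\mathrm{Sig}_{\mathsf{APC}}$. The real difference is the lemma behind the forgery case. The paper simply cites Theorem~\ref{thm:generic-unforgeability}, whose reduction forwards blinded queries as if $M$ itself reached the signing oracle. As you observe, under blind issuance the condition ``$M^{*}$ was never submitted'' is not well defined, because the APC never sees $\mathsf{ATI}$. That citation therefore does not establish what is needed. Your switch to one-more unforgeability, per common information $\mathit{exp}$, is the right notion. Together with the uniqueness check it gives a meaningful conclusion: accepted bookings with expiry $e$ never outnumber completed issuance sessions with expiry $e$, and all are accepted before $e$. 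The cost is that the random-oracle, discrete-log and bounded-concurrency assumptions must be stated, whereas the paper's unconditional $\negl(\lambda)$ leaves them implicit. State the result in this counting form rather than as a per-token case split closed by a union bound, since a one-more bound cannot single out any individual accepted token as ``the forgery.'' The UUID term can be dropped, because it concerns spurious rejections, not replays.

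One caveat applies to your argument and to the paper's alike. Invoking Abe--Okamoto presupposes that $\mathsf{AT}$ is issued with their scheme, but the protocol described in the paper is different. There the APC merely adds $cs_{\mathsf{APC}}$, a public function of $Y_{\mathsf{APC}}$ and $\mathit{exp}$, to the patient-chosen blinded challenge $cu_{\mathsf{patient}}$. A patient who can anticipate the APC's $\mathit{exp}$ (e.g.\ a fixed validity window) can shift $cu_{\mathsf{patient}}$ by $cs_{\mathsf{APC}}(\mathit{exp}_2)-cs_{\mathsf{APC}}(\mathit{exp})$. Unblinding as usual then yields a valid token for any $\mathit{exp}_2$ of the patient's choosing. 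For that instantiation your Case~3 would fail, and so would the paper's claim that altering $\mathit{exp}$ contradicts Theorem~\ref{thm:generic-unforgeability}. This is a defect of the construction rather than of your reasoning, but you should state explicitly the assumption that the deployed scheme is Abe--Okamoto's.
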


\begin{proof}[Proof sketch]
By Theorem~\ref{thm:generic-unforgeability}, $\mathcal{A}$ cannot forge a fresh, valid token. It remains to consider a replay of an issued token.
$$
\mathsf{AT}=\big[(\mathsf{ATI},\mathit{exp}),\,\mathrm{Sig}_{\mathsf{APC}}\big]
$$
(1) \emph{Uniqueness.} Before acceptance, the HO queries the \emph{ATI Uniqueness Validator}. If $\mathsf{ATI}$ is already marked redeemed, the request is rejected, eliminating verbatim transcript replays.

(2) \emph{Expiration.} The HO enforces $\textit{now}\le \mathit{exp}$ (with bounded clock skew). Any attempt outside the validity window is rejected.

Thus, a replay can succeed only if the HO fails to apply either check, or if $\mathcal{A}$ modifies $(\mathsf{ATI},\mathit{exp})$ without invalidating $\mathrm{Sig}_{\mathsf{APC}}$, which would contradict Theorem~\ref{thm:generic-unforgeability}. Therefore, the replay advantage is negligible:
$$
\Pr\big[\mathcal{A}\ \text{wins Replay Game}\big]\le \negl(\lambda)
\qedhere
$$
\end{proof}

\paragraph{Automated symbolic verification} 
\label{paragraphc:auto-symbolic-verification}

The verification was conducted using the Scyther tool~\cite{cremers2008scyther}, which models the adversary under the Dolev-Yao abstraction, assuming the perfect cryptographic soundness of the HIDM framework’s core artifacts while granting the adversary complete control over the communication network, including the ability to intercept, modify, inject, and replay messages. Table~\ref{tab:scyther-verification} shows that the proposed HIDM framework is resistant to eavesdropping, replay, and impersonation attacks under the Dolev-Yao model.

\begin{table}[h!]
\centering
\caption{Summary of Scyther Verification Findings for Episodes 5, 6}
\label{tab:scyther-verification}
\begin{tabular}{|p{1.8cm}|p{4.0cm}|p{1.6cm}|}
\hline
\textbf{Security Property Verified} & \textbf{Interpretation of Result} & \textbf{Attack Prevented} \\
\hline
Confidentiality (via claim \texttt{Secret}) &
The \texttt{Secret} claim ensures that sensitive data exchanged during the protocol, such as the appointment payload, pseudonym access information (PAI), and tokens, remains confidential and cannot be derived by an unauthorized adversary. &
Eavesdropping, information leakage \\ 
\hline
Synchronization (via claim \texttt{Nisynch}) &
The non-injective synchronization claim guarantees that both communicating parties (e.g., a patient and a healthcare organization) agree on the identity of their peer and the protocol data. This prevents desynchronization between sessions. &
Man-in-the-Middle (MITM), replay \\ 
\hline
Authenticity \& Integrity (via claims \texttt{Commit-Running}) &
The \texttt{Running} claim asserts that a protocol instance has started with a specific partner, and the corresponding \texttt{Commit} claim verifies that the protocol has completed with the intended partner and consistent data. Joint satisfaction of these claims confirms mutual authentication and message integrity. &
Impersonation, message tampering \\ 
\hline
\end{tabular}
\end{table}

\subsubsection{Formal privacy analysis}
\label{subsubsec:formal_privacy_analysis}

The privacy properties of the HIDM framework were verified using the Tamarin Prover~\cite{meier2013tamarin}, focusing on privacy-specific goals that extend beyond message-level security. Table~\ref{tab:tamarin-privacy} summarizes the verification results, showing that the HIDM framework upholds pseudonymous authentication, prevents linkability across patient sessions, and enables controlled identity reconstruction only under lawful audit conditions. Specifically, the conditional traceability property was formally verified in Tamarin by modeling the separation of duties between the APC and PTA. The analysis proved that an adversary, even with access to the public outputs of both authorities, cannot derive the link between a PII and a pseudonym unless a designated Warrant action, modeling a legal mandate, is triggered to authorize the disclosure of the internal mappings.

\begin{table}[h!]
\centering
\caption{Tamarin Privacy Verification Findings for Episodes 5, 6}
\label{tab:tamarin-privacy}
\begin{tabular}{|p{2.5cm}|p{4.8cm}|}
\hline
\textbf{Privacy Property Verified} & \textbf{Interpretation of Result} \\ \hline
\textbf{Pseudonymous Authentication} & The patient interacts with the healthcare organization using a pseudonym, ensuring that real identifiers remain hidden from service providers and observers. \\ \hline
\textbf{Intra-person Unlinkability} & Multiple actions (bookings or verifications) by the same patient employ fresh pseudonyms, preventing correlation of sessions by an adversary. \\ \hline
\textbf{Inter-person Unlinkability} & A transaction initiated by one patient is observationally indistinguishable from that of another, preventing cross-user linkage. \\ \hline
\textbf{Conditional Traceability} & A lawful audit trace exists where authorized entities can cooperatively reconstruct the patient's identity mapping, while unilateral or unauthorized attempts remain infeasible. \\ \hline
\end{tabular}
\end{table}

\section{Performance Measurement}
\label{sec:performance-measurement}

The performance impact of privacy-preserving credentials within the proposed HIDM framework is evaluated through processing time measurements. Six representative scenarios, described in Section~\ref{subsec:evaluation-methodology}, capture core interactions in the framework. The simulation environment, execution procedure, and resulting metrics are presented in Section~\ref{subsec:simulation-setup-results}.

\subsection{Evaluation Methodology}
\label{subsec:evaluation-methodology}

Three credential signature schemes are considered: baseline RSA, CL-RSA, and CL-Bilinear. 
The RSA scheme is evaluated only for the Patient Credential Issuance scenario, serving as a non-privacy-preserving baseline. 
CL-RSA is included as a representative privacy-preserving scheme based on RSA, allowing a direct comparison with the bilinear variant. 
In contrast, CL-Bilinear credentials leverage pairing-based operations and are evaluated across all scenarios to demonstrate the efficiency gains of the proposed framework. 
Both CL-RSA and CL-Bilinear credentials are carried forward into subsequent interactions (Pseudonym Token issuance, pseudonym-specific key issuance, Appointment Token issuance, Appointment Booking, and In-person Verification), so the measurements reflect the end-to-end performance impact of using CL-based credentials within the IDM framework. 
All schemes are implemented at an equivalent cryptographic strength corresponding to 128-bit security, in accordance with the National Institute of Standards and Technology (NIST) guidelines~\cite{NIST:online} (Table~\ref{tab:nist_key_sizes}), ensuring a fair and consistent basis for comparison.
To ensure statistical reliability, each scenario was executed twelve times. The highest and lowest execution times were discarded, and the average processing time was computed from the remaining ten runs to establish a stable performance baseline, mitigating the effects of cold starts and system caching.
The evaluation covers six scenarios representing the core credential and token operations in the HIDM workflow:

\begin{enumerate}

    \item \textbf{Patient Credential Issuance}: Generate and sign a Patient Credential using signature scheme (RSA, CL-RSA, CL-Bilinear).
    
    \item \textbf{Pseudonym Token Issuance}: Using the chosen credential signature scheme, generate and verify a \( \mathsf{PoK}_{\mathsf{PCred}} \) for a disclosed attribute (e.g., \texttt{PatientID}), then sign the token with a Schnorr signature scheme.
    
    \item \textbf{Pseudonym-Specific Private Key Issuance}: Using the chosen credential signature scheme, generate and verify \( \mathsf{PoK}_{\mathsf{PCred}} \) for a disclosed attribute (e.g., \texttt{PatientID}), then execute the Blind IBS key-extraction protocol to issue the patient's pseudonym-specific private key.

    \item \textbf{Appointment Token Issuance}: Using the chosen credential signature scheme (CL-RSA or CL-Bilinear), generate and verify a \( \mathsf{PoK}_{\mathsf{PCred}} \) for a disclosed attribute (e.g., \texttt{PatientID}), then sign the token with a partially blind Schnorr scheme.
    
   \item \textbf{Appointment Booking}: Using the chosen credential signature scheme, generate and verify the patient's Blind IBS signature on an \texttt{Appointment\allowbreak Schedule\allowbreak Request}, and verify the AT signature.

    \item \textbf{In-person Verification}: Using the chosen credential signature scheme, generate and verify the patient's Blind IBS signature on an \texttt{Identity\allowbreak Verification\allowbreak Request}, generate and verify 
    $\mathsf{PoK}_{\mathsf{PCred}}$ for a disclosed attribute (e.g., \texttt{BioHash}), 
    and verify the PT signature.

\end{enumerate}

\begin{table}[h]
    \centering
    \caption{NIST Recommended Equivalent Key Sizes at 128-bit Security Strength}
    \label{tab:nist_key_sizes}
    \begin{tabular}{ll}
        \hline
        \textbf{Algorithm} & \textbf{Key Size (bits)} \\
        \hline
        RSA Key Size & 3072 \\
        ECC Key Size & 256 \\
        Schnorr Key Size: $p$ (modulus) & 3072 \\
        Schnorr Key Size: $q$ (subgroup order) & 256 \\
        \hline
    \end{tabular}
\end{table}

\section{Performance Measurement}
\label{sec:performance-measurement}

The performance impact of privacy-preserving credentials within the proposed HIDM framework is evaluated through processing time measurements. Six representative scenarios, described in Section~\ref{subsec:evaluation-methodology}, capture core interactions in the framework. The simulation environment, execution procedure, and resulting metrics are presented in Section~\ref{subsec:simulation-setup-results}.

\subsection{Evaluation Methodology}
\label{subsec:evaluation-methodology}

Three credential signature schemes are considered: baseline RSA, CL-RSA, and CL-Bilinear. 
The RSA scheme is evaluated only for the Patient Credential Issuance scenario, serving as a non-privacy-preserving baseline. 
CL-RSA is included as a representative privacy-preserving scheme based on RSA, allowing a direct comparison with the bilinear variant. 
In contrast, CL-Bilinear credentials leverage pairing-based operations and are evaluated across all scenarios to demonstrate the efficiency gains of the proposed framework. 
Both CL-RSA and CL-Bilinear credentials are carried forward into subsequent interactions (Pseudonym Token issuance, pseudonym-specific key issuance, Appointment Token issuance, Appointment Booking, and In-person Verification), so the measurements reflect the end-to-end performance impact of using CL-based credentials within the IDM framework. 

All schemes are implemented at an equivalent cryptographic strength corresponding to 128-bit security, in accordance with the National Institute of Standards and Technology (NIST) guidelines~\cite{NIST:online} (Table~\ref{tab:nist_key_sizes}), ensuring a fair and consistent basis for comparison. To ensure statistical reliability, each scenario was executed twelve times. The highest and lowest execution times were discarded, and the average processing time was computed from the remaining ten runs to establish a stable performance baseline, mitigating the effects of cold starts and system caching. 

The evaluation covers six scenarios representing the core credential and token operations in the HIDM workflow:

\begin{enumerate}

    \item \textbf{Patient Credential Issuance}: Generate and sign a Patient Credential using signature scheme (RSA, CL-RSA, CL-Bilinear).
    
    \item \textbf{Pseudonym Token Issuance}: Using the chosen credential signature scheme, generate and verify a $ \mathsf{PoK}_{\mathsf{PCred}} $ for a disclosed attribute (e.g., \texttt{PatientID}), then sign the token with a Schnorr signature scheme.
    
    \item \textbf{Pseudonym-Specific Private Key Issuance}: Using the chosen credential signature scheme, generate and verify $ \mathsf{PoK}_{\mathsf{PCred}} $ for a disclosed attribute (e.g., \texttt{PatientID}), then execute the Blind IBS key-extraction protocol to issue the patient's pseudonym-specific private key.

    \item \textbf{Appointment Token Issuance}: Using the chosen credential signature scheme (CL-RSA or CL-Bilinear), generate and verify a $ \mathsf{PoK}_{\mathsf{PCred}} $ for a disclosed attribute (e.g., \texttt{PatientID}), then sign the token with a partially blind Schnorr scheme.
    
   \item \textbf{Appointment Booking}: Using the chosen credential signature scheme, generate and verify the patient's Blind IBS signature on an \texttt{Appointment\allowbreak Schedule\allowbreak Request}, and verify the AT signature.

    \item \textbf{In-person Verification}: Using the chosen credential signature scheme, generate and verify the patient's Blind IBS signature on an \texttt{Identity\allowbreak Verification\allowbreak Request}, generate and verify $\mathsf{PoK}_{\mathsf{PCred}}$ for a disclosed attribute (e.g., \texttt{BioHash}), and verify the PT signature.
    
\end{enumerate}

\begin{table}[h]
    \centering
    \caption{NIST Recommended Equivalent Key Sizes at 128-bit Security Strength}
    \label{tab:nist_key_sizes}
    \begin{tabular}{ll}
        \hline
        \textbf{Algorithm} & \textbf{Key Size (bits)} \\
        \hline
        RSA Key Size & 3072 \\
        ECC Key Size & 256 \\
        Schnorr Key Size: $p$ (modulus) & 3072 \\
        Schnorr Key Size: $q$ (subgroup order) & 256 \\
        \hline
    \end{tabular}
\end{table}

\subsection{Simulation Setup and Results}
\label{subsec:simulation-setup-results}

The performance evaluation was conducted using a custom simulator designed to capture the core interactions of the proposed framework. The simulator was executed on a Google Cloud \texttt{e2-medium} virtual machine, equipped with two vCPUs (Intel Xeon @ 2.8 GHz), four GB of RAM, and running Ubuntu 20.04.6 LTS. The implementation was developed in Python (v3.9.7). Cryptographic parameters for the RSA and CL-RSA signature schemes were generated using the PyCryptodome library (v3.15.0) \cite{pypiClientChallenge1}, and those for CL-Bilinear using the bplib library (v0.1.5) \cite{pypiClientChallenge2}. 

Table~\ref{tab:hidm_performance_summary} summarizes the processing time measurements for the six scenarios, comparing the privacy-preserving CL-RSA and CL-Bilinear schemes and including a baseline RSA signature for the Patient Credential Issuance scenario to benchmark the overhead introduced by these privacy-preserving approaches. \textit{Payload (Pre-signature)} denotes the size of the unsigned message or request before signature generation. \textit{Full Request} and \textit{Response} represent the total message sizes exchanged during protocol execution, while \textit{Signature Payload} indicates the size of the cryptographic signature. \textit{Avg.~Time} reflects the measured processing time required to generate and verify each artifact.

\begin{table}[t!]
\centering
\caption{Performance Summary of HIDM Cryptographic Modules}
\label{tab:hidm_performance_summary}
\renewcommand{\arraystretch}{1.1}
\setlength{\tabcolsep}{3pt}
\small
\begin{tabular}{|p{2.2cm}|p{1.1cm}|p{1.15cm}|p{1.05cm}|p{1.25cm}|p{1.1cm}|}
\hline
\textbf{Scheme / Module} &
\textbf{Payload (Pre-signature) (byte)} &
\textbf{Full Request (byte)} &
\textbf{Response (byte)} &
\textbf{Signature Payload (byte)} &
\textbf{Avg. Time (ms)} \\
\hline
RSA (Credential) & 227 & -- & 727 & 384 & 100.71 \\
CL--RSA (Credential) & 227 & 2,092 & 1,154 & 387 & 555.43 \\
CL--Bilinear (Credential) & 227 & 322 & 297 & 41 & 151.29 \\
\hline
CL--RSA (AT) & -- & 4,803 & 44 & 25 & 545.67 \\
CL--Bilinear (AT) & -- & 1,983 & 42 & 177 & 112.60 \\
\hline
CL--RSA (PT) & -- & 7,399 & 107 & 79 & 690.76 \\
CL--Bilinear (PT) & -- & 3,702 & 101 & 87 & 222.80 \\
\hline
CL--RSA (Blind-IBS Key) & -- & 4,795 & 128 & 128 & 660.06 \\
CL--Bilinear (Blind-IBS Key) & -- & 817 & 128 & 128 & 120.95 \\
\hline
CL--RSA (Appointment Booking) & -- & 1,836 & 42 & 177 & 40.14 \\
CL--Bilinear (Appointment Booking) & -- & 1,436 & 37 & 177 & 38.85 \\
\hline
CL--RSA (In-Person Verification) & -- & 1,983 & 42 & 177 & 719.47 \\
CL--Bilinear (In-Person Verification) & -- & 1,831 & 54 & 177 & 272.78 \\
\hline
\end{tabular}
\end{table}

For cryptographic operations, CL--Bilinear modules exhibit a substantial reduction in processing time, typically between 40\% and 80\% compared to CL--RSA, confirming the efficiency of pairing-based constructions under equivalent 128-bit security strength. This trend is consistent across all modules. For instance, the Pseudonym-Specific Private Key (Blind-IBS) issuance achieves the largest improvement, decreasing from 660~ms to 121~ms ($\sim$82\%). Patient-side token operations are also significantly faster: Appointment Token issuance time falls from 546~ms to 113~ms ($\sim$79\%), while Pseudonym Token issuance drops from 691~ms to 223~ms ($\sim$68\%). Even routine operations such as Appointment Booking remain efficient, completing in $\sim$39~ms. 

The In-person Verification scenario remains the most computationally intensive due to the combined Blind-IBS and zero-knowledge proof verification; yet, the CL--Bilinear implementation maintains a practical latency of about 273~ms compared to 719~ms for CL--RSA ($\sim$62\% reduction). All simulation scripts, Scyther and Tamarin formal-verification models, and dataset-generation utilities used in this evaluation are publicly available at~\cite{nasif2005:healthcareIDM}.

\section{Discussion, Conclusion, and Future Work} 
\label{sec:discussion}

The contribution of this work is not the proposal of new cryptographic primitives or optimizations. Instead, the novelty lies in the system-level integration of existing, well-understood cryptographic mechanisms into a healthcare-specific identity management architecture that supports privacy, accountability, and operational continuity simultaneously.
The proposed privacy-preserving HIDM framework addresses a central challenge in health informatics: reconciling longitudinal data continuity with robust patient privacy protections. By integrating verifiable legitimacy, unlinkable pseudonymity, and conditional traceability, the design formally models and balances the long-standing tension between maintaining longitudinal health records and limiting unnecessary cross-context linkability. The MSRA analysis indicates that stakeholder requirements are systematically addressed across operational episodes. Additionally, formal cryptographic proofs and symbolic verification (via Scyther and Tamarin) demonstrate that the framework’s protocol workflows preserve unforgeability, confidentiality, and unlinkability within the defined Dolev-Yao threat model. Furthermore, the performance evaluation of core cryptographic operations indicates computational feasibility, suggesting that processing times are compatible with typical interaction latencies in clinical settings such as patient check-in and health record access within an EHR.

The framework architecture introduces specific trade-offs. The establishment of institutional legitimacy depends on the Government Health Authority (GHA), creating a degree of institutional dependency. Credential lifecycle events, such as the issuance and revocation of Legitimacy Verifiable Credentials, could be recorded in an append-only transparency log (e.g., a permissioned ledger) to mitigate this vulnerability. Such transparency mechanisms shift the model from blind reliance on a single authority toward verifiable trust, enabling independent monitoring of credential governance actions.

Conditional traceability relies on the split-knowledge design between the APC and PTA. Although simultaneous compromise or collusion is unlikely under standard trust assumptions, it remains a residual risk. Mitigation strategies include comprehensive logging of mapping-table access events and independent oversight mechanisms. Additionally, incorporating secret-sharing techniques within each authority can prevent any single administrator from reconstructing complete identity mappings without multi-party cooperation, thereby reducing insider threats.

Regarding real-world integration, the proposed HIDM framework is designed to avoid assumptions that conflict with existing healthcare IT ecosystems and to support incremental deployment alongside established infrastructures. The architecture interoperates with current EHR systems without requiring modifications to underlying clinical data formats and with minimal disruption to existing clinical workflows. Core components, such as the HRR and the Auditor Ledger, can be deployed as adjunct services, enabling gradual adoption rather than wholesale system replacement. More broadly, the modular, API-driven design positions the framework as an architectural layer for next-generation digital health ecosystems, facilitating alignment with interoperability standards such as HL7 FHIR. For instance, the Health Record Controller can expose a FHIR-compliant API for accessing patient records, allowing privacy-preserving identity management to operate alongside standardized clinical data exchange. While the administrative onboarding of patients and the establishment of government-anchored trust roots introduce deployment challenges, these design choices reflect a deliberate balance between practical implementability, interoperability requirements, and regulatory audit constraints.

Looking forward, the HIDM framework provides a foundation for extending privacy-preserving identity management across the broader healthcare ecosystem. While the present design formally models the privacy–continuity balance between patients and healthcare providers, real-world healthcare delivery also depends on health insurance organizations for billing and authorization. An important extension of this work is therefore the integration of insurance endpoints, enabling eligibility verification and billing integrity without exposing patients’ longitudinal clinical records to payers.

Another promising research direction involves integrating a zero-knowledge-based medication threshold mechanism that allows patients to demonstrate compliance with prescription limits without disclosing their complete medication histories. Such extensions would further strengthen the feasibility of privacy-preserving, auditable healthcare services that align regulatory accountability requirements with data minimization principles and patient autonomy.





\appendix         

\renewcommand{\thealgorithm}{A.\arabic{algorithm}}  

\section{Pseudonym Binding Proof}
\label{appendix:pseudonym_binding_proof}

The Pseudonym Binding Proof (PBP) is a NIZK proof that demonstrates the cryptographic linkage between a patient’s pseudonym \(\mathsf{P}_{\mathsf{patient}}\) and the healthcare-specific identifier \(\mathsf{PatientID}\) of Patient Credential.
The objective of the NIZK proof is to allow the patient to convince the PTA that the pseudonym 
\( \mathsf{P}_{\mathsf{patient}} = (P_1,\, P_2) \) is derived from their \(\mathsf{PatientID}\), without revealing either the derivation process or any additional identity attributes from Patient Credential.

\subsection*{Proof Objective}
Let \(z = e(g_1, g_2) \in G_T\), \(\mathsf{pk}_{\mathsf{patient}} = x g_2 \in G_2\), and \(h = \mathsf{HashToField}(\mathsf{PatientID}) \in \mathbb{F}_r\). 
During pseudonym generation, the patient selects a nonce \(r \in \mathbb{F}_r\), which acts as a blinding factor ensuring that different pseudonyms generated from the same \(\mathsf{PatientID}\) are unlinkable to each other.
The patient must demonstrate knowledge of the nonce \(r\) as the value that links all these components \((\mathsf{P}_{\mathsf{patient}} = (P_1,\, P_2), g_1, g_2, z, \mathsf{pk}_{\mathsf{patient}}, h)\) correctly. 

The structure of the pseudonym \( \mathsf{P}_{\mathsf{patient}} \) is given by
\[
P_1 = z^r z^{h} \;\text{or}\; \tfrac{P_1}{z^{h}} = z^{r}, 
\qquad
P_2 = r\,\mathsf{pk}_{\mathsf{patient}} = r(x g_2).
\]

\subsection*{Credential Binding Pre-Check (by PTA)}
PTA verifies the issuer's signature on the Patient Credential and obtains the attribute \( \mathsf{PatientID} \).
PTA computes \( h = \mathsf{HashToField}(\mathsf{PatientID}) \).

\subsection*{Proof Generation (by Patient)}
\begin{enumerate}
  \item Choose random scalars \( t_1, t_2 \in \mathbb{F}_r \).
  \item Commitment calculation:
    \(
      T_1 = z^{t_1} \in G_T, \quad
      T_2 = t_2\,\mathsf{pk}_{\mathsf{patient}} \in G_2
    \)
    
  \item Challenge calculation:
    \[
    c = \mathsf{HashToField}\!\big(\mathsf{P}_{\mathsf{patient}} \parallel T_1 \parallel T_2 \parallel \mathsf{pk}_{\mathsf{patient}} \parallel h\big) \in \mathbb{F}_r
    \]

  \item Response calculation:
    \(
      s_1 = t_1 + c r, \quad s_2 = t_2 + c r \quad \in \mathbb{F}_r
    \)
  \item Proof calculation:
    \(
      \pi = (T_1, T_2, c, s_1, s_2)
    \)
\end{enumerate}

\subsection*{Proof Verification (by PTA)}
\begin{enumerate}

  \item Recompute challenge
    \[
    c' = \mathsf{HashToField}\!\big(\mathsf{P}_{\mathsf{patient}} \parallel T_1 \parallel T_2 
    \parallel \mathsf{pk}_{\mathsf{patient}} \parallel h\big) \in \mathbb{F}_r.
    \]

  \item Check in \(G_T\) (multiplicative):
    \[
    z^{s_1} \stackrel{?}{=} T_1 \cdot \Big(\tfrac{P_1}{z^{h}}\Big)^{c'}
    \]
    
  \item Check in \(G_2\) (additive):
    \[
    s_2\,\mathsf{pk}_{\mathsf{patient}} \stackrel{?}{=} T_2 + c' P_2
    \]

  \item Accept iff both equations hold and the credential pre-check passed.
  
\end{enumerate}

\subsection*{Correctness}
The verification holds since 
\( z^{s_1} = z^{t_1 + c r} = z^{t_1}(z^r)^c = T_1 (z^r)^c = T_1 \big(\tfrac{P_1}{z^{h}}\big)^c \)
and 
\( s_2\,\mathsf{pk}_{\mathsf{patient}} = (t_2 + c r)\,\mathsf{pk}_{\mathsf{patient}} = T_2 + c (r\,\mathsf{pk}_{\mathsf{patient}}) = T_2 + c P_2. \)


\bibliographystyle{IEEEtran}
\bibliography{IEEEfull}

\end{document}